\newtheorem{thm}{Theorem}
\def\sgn{\mathrm{sgn}}
\DeclareMathAlphabet{\bit}{OML}{cmm}{b}{it}
\def\<{\leqslant}           
\def\>{\geqslant}           
\def\d{\partial}
\def\wh{\widehat}
\def\wt{\widetilde}
\def\Re{\mathrm{Re} }   
\def\mR{{\mathbb R}}    
\def\mC{\mathbb{C}}    
\def\Tr{\mathrm{Tr}}   
\def\rT{\mathrm{T}}    
\def\rF{\mathrm{F}}
\def\bE{\mathbf{E}}
\def\[[[{[\![\![}
\def\]]]{]\!]\!]}
\def\bra{\langle}
\def\ket{\rangle}
\def\re{\mathrm{e}}        
\def\rd{\mathrm{d}}        
\def\bJ{\mathbf{J}}
\def\br{\mathbf{r}}
\def\x{\times}
\def\ox{\otimes}
\def\fF{\mathfrak{F}}
\def\fH{\mathfrak{H}}
\def\sM{\mathsf{M}}
\def\sR{\mathsf{R}}
\def\sL{\mathsf{L}}
\def\sB{\mathsf{B}}
\def\eps{\epsilon}
\def\ups{\upsilon}
\def\Ups{\Upsilon}
\def\diag{\mathop\mathrm{diag}}    
\def\blockdiag{\mathop\mathrm{blockdiag}}
\numberwithin{equation}{section}
\begin{document}
\title[Decoherence as Noncommutativity Decay for Quantum Oscillators]
{Decoherence Quantification Through Commutation Relations Decay For Open Quantum Harmonic Oscillators}


\author{Igor G. Vladimirov, \qquad Ian R. Petersen}
\address{School of Engineering, Australian National University, ACT 2601, Australia}
\email{igor.g.vladimirov@gmail.com,i.r.petersen@gmail.com}
\thanks{This work is supported by the Australian Research Council  grant DP210101938.}

\subjclass[2010]{81S22, 
81S25, 
81P16, 
81S05,      
81Q93,    	
93E15,  	
81R15,  
81Q10,   	
81Q15, 
37L40,      
81P40,    	
60G15.   	
}

\keywords{Open quantum harmonic oscillator,
quantum decoherence,
two-point commutator matrix,
exponential decay,
Lyapunov exponent,
algebraic Lyapunov inequality,
spectrum perturbation}

\date{}

\begin{abstract}
This paper is concerned with multimode open quantum harmonic oscillators (OQHOs), described by linear quantum stochastic differential equations with multichannel external bosonic fields. We consider the exponentially fast decay in the two-point commutator matrix of the system variables as a manifestation of quantum decoherence. Such dissipative effects are  caused by the interaction of the system with its environment and lead to a loss of specific features of the unitary evolution which the system would have in the case of isolated dynamics.
These features are exploited as nonclassical resources  in quantum computation and quantum information processing technologies.
A system-theoretic  definition of decoherence time in terms of the commutator matrix decay is discussed,  and an upper bound for it is provided using algebraic Lyapunov inequalities. Employing spectrum perturbation techniques, we investigate the asymptotic behaviour of a related Lyapunov exponent for the oscillator when the system-field coupling is specified by
a small coupling strength parameter and a given coupling shape matrix.
The invariant quantum state of the system, driven by vacuum fields, in the weak-coupling limit is also studied.
We illustrate the results  for one- and two-mode oscillators with multichannel external fields and outline their application to a decoherence control problem for a feedback interconnection of OQHOs.
\end{abstract}

\maketitle

\section{Introduction}

In contrast to classical deterministic and stochastic systems, the evolution of quantum mechanical objects is described in terms of operator-valued variables on a Hilbert space,
and their statistical  properties are formulated using quantum probability \cite{H_2001,S_1994}. The quantum dynamic variables  are usually noncommuting operators, which, unlike classical variables, are  not accessible to simultaneous observation because quantum measurements, accompanied by interaction of quantum systems with classical measuring devices and resulting in real-valued quantities,  modify the quantum states. The latter, represented by density  operators on the same Hilbert space,  are quantum counterparts of classical scalar-valued probability distributions.

The more complicated nature of quantum observables and quantum states and the subtle interplay between the noncommutativity and measurements are specific features of quantum mechanics.
These features are closely related to the model of unitary, and hence reversible, dynamics of an isolated quantum system, whose initial conditions have a nonvanishing effect on the  subsequent evolution of the system. The reversibility of the isolated quantum dynamics and its consequences (such as conservation laws) play an important role in quantum computation and quantum information processing which exploit quantum mechanical resources \cite{NC_2000}.

The dynamics of an open quantum system \cite{BP_2006,GZ_2004}, which interacts with its environment (such as other quantum or classical systems and external fields), are qualitatively different from the unitary evolution. In this case, the internal dynamics, which the system would have in isolation from the surroundings, is ``diluted'' with the contribution from the other systems whose variables (as operators on different spaces) commute with (and hence, are classical with respect to) those of the underlying quantum system. This makes the open system gradually lose, over the course of time,  its original quantum features in terms of the commutation structure and statistical properties, which is interpreted as quantum decoherence.

The decoherence effects, coming from the dissipative interaction with the environment, are particularly apparent in open quantum harmonic oscillators (OQHOs) which play the role of building  blocks in linear quantum systems theory \cite{NY_2017,P_2017} aiming to engineer quantum system interconnections with desired dynamic properties, including  stability, robustness and optimality.  Such systems are endowed with conjugate position-momentum pairs \cite{S_1994} (or their linear combinations) as dynamic variables and, in the framework of the Hudson-Parthasarathy calculus \cite{HP_1984,P_1992}, are described by linear quantum stochastic differential equations (QSDEs). These equations are driven by noncommutative quantum Wiener processes on a symmetric Fock space \cite{PS_1972} which model bosonic fields. Similarly to classical linear systems \cite{KS_1972}, the solution of the linear QSDE consists of the system response to the initial system variables and the forced response to the external field, with both responses commuting  with each other. For a dissipative OQHO, with a Hurwitz dynamics matrix,  the forced response becomes dominant  over the response to the initial condition which fades away exponentially fast, thus giving rise to meaningful time scales for the quantum decoherence in the system. On the other hand, the same mechanism underlies the convergence of the stable OQHO to its invariant zero-mean Gaussian quantum state in the case of vacuum fields, and this is employed for the generation of such states through dissipation \cite{Y_2012}.

The present paper is concerned with a particular way to quantify decoherence for multimode dissipative OQHOs, which defines the decoherence time in terms of the exponentially fast decay in the two-point canonical commutation relations (CCRs) for the system variables. We provide  a system-theoretic   upper bound for the decoherence time using algebraic Lyapunov inequalities.
We also investigate the related asymptotic behaviour of the  leading Lyapunov exponent for the dynamics matrix of the oscillator with a small coupling strength parameter and a given coupling shape matrix. This analysis employs spectrum perturbation techniques \cite{M_1985} and    is carried out for a class of dissipative OQHOs with a positive definite energy matrix, a nonsingular one-point CCR matrix and pairwise different eigenfrequencies for the uncoupled version of the system. The weak-coupling formulation allows a low decoherence criterion for the OQHO to be obtained in the form of the asymptotic decoherence time significantly exceeding the largest period of uncoupled oscillatory modes.
Since lowering the decoherence towards isolating the system from its environment is in conflict with increasing the dissipation for accelerated convergence to the invariant zero-mean Gaussian quantum state mentioned above, we also discuss the weak-coupling limit for the invariant covariance matrix which completely specifies such a state.
These asymptotic results are compared with exact computations for one- and two-mode oscillators  driven by multichannel fields. We also apply them to a decoherence control setting which is outlined for a coherent (measurement-free) feedback interconnection of two OQHOs (playing the role of a quantum plant and a quantum controller) with direct and indirect coupling.

Note that the type of decoherence discussed in this paper corresponds to vacuum decoherence in \cite[Section 4.4.1]{BP_2006}, while there also are other scenarios of decoherence, involving, for example, thermal quantum noise \cite[Section~3.3.3]{GZ_2004}, quantum measurements and different classes of quantum  systems, such as those with finite-level dynamic variables (see also \cite[Section 4.4.1]{BP_2006} and references therein, including \cite{CL_1985,U_1995}). Although,  in the context of quantum computing and quantum information, the study of decoherence phenomena is particularly relevant for finite-level (multiqubit) systems \cite[Chapter III, Section 8]{NC_2000} which employ the Pauli matrices \cite{S_1994} and their extensions, OQHOs are tractable as the closest quantum-mechanical  counterparts of classical linear stochastic systems, and the results for OQHOs can serve as prototypes for a more complicated decoherence analysis in the finite-level case.

The paper is organised as follows.
Section~\ref{sec:sys} describes the class of OQHOs under consideration,  including the isolated dynamics case in the absence of coupling.
Section~\ref{sec:decay} discusses the exponential decay in the two-point CCRs for the system variables and the corresponding decoherence time.
Section~\ref{sec:asy} obtains asymptotic estimates for the Lyapunov exponents and decoherence time in the presence of a small coupling strength parameter.
Section~\ref{sec:cov} establishes a weak-coupling limit for the invariant covariance matrix.
Section~\ref{sec:one} shows that the decoherence time estimates are exact for one-mode oscillators.
Section~\ref{sec:num} provides a numerical illustration of the  results for a two-mode oscillator.
Section~\ref{sec:two} outlines an application of the results to a decoherence control setting for interconnected OQHOs.
Section~\ref{sec:conc} makes concluding remarks.

\section{Open quantum harmonic oscillators}
\label{sec:sys}

We consider an $\frac{n}{2}$-mode open quantum harmonic oscillator (OQHO) with an even number $n$ of dynamic variables $X_1, \ldots, X_n$. In accordance with the Heisenberg picture of quantum dynamics \cite{S_1994},  they are time-varying self-adjoint  operators on a complex separable Hilbert space $\fH$, satisfying at every moment of time the canonical commutation relations (CCRs)
\begin{equation}
\label{XCCR}
    [X,X^\rT] = 2i\Theta,
    \qquad
    X := \begin{bmatrix}
      X_1\\
      \vdots\\
      X_n
    \end{bmatrix}
\end{equation}
with a constant matrix $\Theta = -\Theta^\rT \in \mR^{n\x n}$. Here, the matrix transpose $(\cdot)^\rT$ applies to vectors of operators as if they consisted of scalars  (with vectors being organised as columns unless indicated otherwise), and the commutator $[\alpha, \beta]:= \alpha \beta - \beta \alpha$ of linear operators extends to the commutator matrix $[\xi, \eta^\rT] := ([\xi_j, \eta_k])_{1\< j \< a, 1\< k \< b}$ for vectors $\xi:= (\xi_j)_{1\< j \< a}$ and $\eta:= (\eta_k)_{1\< k \< b}$ of operators. The internal energy of the oscillator is described by the Hamiltonian
\begin{equation}
\label{H}
  H:= \frac{1}{2}X^\rT R X,
\end{equation}
which is a self-adjoint operator on $\fH$, whose quadratic dependence on the system variables is parameterised by an energy matrix $R = R^\rT \in \mR^{n\x n}$. As an open system, the OQHO interacts with its environment which is modelled by  an $\frac{m}{2}$-channel external bosonic field in the form of an even number $m$ of self-adjoint quantum Wiener processes $W_1, \ldots, W_m$ on a symmetric Fock space $\fF$ \cite{P_1992} with the Ito table
\begin{equation}
\label{Omega}
    \rd W \rd W^\rT
    = \Omega \rd t,
    \qquad
    W := \begin{bmatrix}
      W_1\\
      \vdots\\
      W_m
    \end{bmatrix},
    \qquad
    \Omega: = I_m + iJ,
\end{equation}
where
\begin{equation}
\label{JJ}
    J
    :=
    \bJ \ox I_{m/2}
    =
    \begin{bmatrix}
      0 & I_{m/2}\\
    -I_{m/2} & 0
    \end{bmatrix}.
\end{equation}
Here, $\ox$ is the Kronecker product of matrices, $I_r$ is the identity matrix of order $r$, and the matrix
\begin{equation}
\label{bJ}
        \bJ
        : =
        {\begin{bmatrix}
        0 & 1\\
        -1 & 0
    \end{bmatrix}}
\end{equation}
spans the one-dimensional subspace of antisymmetric matrices of order 2. In accordance with (\ref{Omega}) and the commutativity $[W, \rd W^\rT] = 0$ between $W$ (or any other adapted process) and the future-pointing Ito increments $\rd W$,  the matrix $J$ in (\ref{JJ}) specifies the commutation structure of $W$ as
\begin{equation*}
\label{WWcomm}
  [W(s), W(t)^\rT]
  =
  2i \min(s,t)J,
  \qquad
  s, t \> 0.
\end{equation*}
The internal dynamics of the system and its interaction with the external field lead to a quantum stochastic differential equation (QSDE)  \cite{NY_2017,P_2017} for the time evolution of the system variables:
\begin{equation}
\label{dX}
    \rd X  = AX \rd t + B\rd W,
\end{equation}
which is understood in the sense of the Hudson-Parthasarathy calculus \cite{HP_1984,P_1992}.
Here, $A \in \mR^{n\x n}$, $B \in \mR^{n\x m}$ are constant matrices which are specified by the CCR matrices  $\Theta$, $J$  from (\ref{XCCR}), (\ref{JJ}),  the energy matrix $R$ from (\ref{H}), and a system-field coupling matrix $M \in \mR^{m\x n}$ as
\begin{equation}
\label{AB}
    A = 2\Theta (R + M^\rT J M),
     \qquad
     B = 2\Theta M^\rT.
\end{equation}
The matrix $M$ parameterises a vector  $MX$ of $m$ self-adjoint coupling operators, which pertain to the energy exchange between the OQHO and the external quantum field.  In the absence of coupling between the system and the environment, when $M=0$, the matrix $A$ takes the form
\begin{equation}
\label{A0}
    A_0 := 2\Theta R,
\end{equation}
and the dispersion matrix $B$ in (\ref{AB}) vanishes, so that
the QSDE (\ref{dX}) loses its diffusion term $B\rd W$ and reduces to an ODE
\begin{equation}
\label{Xdot}
  \dot{X} =
  i[H,X]
  =
  A_0X,
\end{equation}
where $\dot{(\ )}:= \d_t(\, )$ is the time derivative. In this isolated dynamics case, where $X(t) = U(t)^\dagger X(0) U(t)$, with $U(t):=\re^{-itH(0)}$ a unitary operator and $(\cdot)^\dagger$ the operator adjoint, the Hamiltonian $H(t)= U(t)^\dagger H(0) U(t) = H(0)U(t)^\dagger U(t) = H(0)$  is preserved in time. This can also be seen from  (\ref{H}), (\ref{Xdot})  as
$$
    \dot{H}
    =
    \frac{1}{2} (\dot{X}^\rT R X + X^\rT R\dot{X} )
    =
    \frac{1}{2}
    X^\rT (A_0^\rT R + R A_0)X
    =
    0
$$
since, in view of (\ref{A0}),
\begin{equation}
\label{A0RRA0}
    \frac{1}{2} (A_0^\rT R + R A_0) = (\Theta R)^\rT R + R \Theta R = 0
\end{equation}
due to the antisymmetry of the CCR matrix $\Theta$ and the symmetry of the energy matrix $R$. For the general QSDE (\ref{dX}), the Hamiltonian $H$ is no longer a constant operator. More precisely,  by applying the quantum Ito lemma \cite{HP_1984,P_1992} along with (\ref{Omega}) to (\ref{H}), it follows that
\begin{align}
\nonumber
    \rd H
    &=
    \frac{1}{2}
    ((\rd X)^\rT R X + X^\rT R \rd X + (\rd X)^\rT R \rd X) \\
\nonumber
    & =
    \frac{1}{2}
    (X^\rT(A^\rT R + RA) X\rd t + 2X^\rT R B\rd W + (\rd W)^\rT B^\rT R B \rd W) \\
\nonumber
    & =
    \frac{1}{2}
    X^\rT(\wt{A}^\rT R + R\wt{A}) X\rd t
    +
    X^\rT R B\rd W +
    \frac{1}{2}
    \bra
        B^\rT R B,
        \Omega
    \ket_\rF\rd t\\
\label{dH}
    & =
    \frac{1}{2}(X^\rT(\wt{A}^\rT R + R\wt{A}) X +\bra R, BB^\rT\ket_\rF)\rd t
    +
    X^\rT R B\rd W ,
\end{align}
where
\begin{equation}
\label{At}
  \wt{A} := A-A_0 = 2\Theta M^\rT J M,
  \qquad
  BB^\rT= -4\Theta M^\rT M \Theta,
\end{equation}
in accordance with (\ref{AB}), (\ref{A0}).
Here, $\bra \cdot, \cdot \ket_\rF$ is the Frobenius inner product of matrices \cite{HJ_2007}, and  use is also made of the commutativity $[X, \rd W^\rT] = 0$ together with the special structure of the quantum Ito matrix matrix $\Omega$ from (\ref{Omega}) and the matrix $A$ in (\ref{AB}), with the latter implying that
\begin{equation}
\label{ARRA}
  A^\rT R + RA
  =
  \wt{A}^\rT R + R\wt{A}
\end{equation}
in view of (\ref{A0RRA0}), (\ref{At}).
In what follows,  the expectation
\begin{equation}
\label{bE}
    \bE \zeta := \Tr(\rho \zeta)
\end{equation}
of a quantum variable $\zeta$ on the system-field space $\fH:= \fH_0 \ox \fF$ is over the tensor-product state
$$    
    \rho = \rho_0 \ox \ups,
$$
where $\rho_0$ is the initial quantum state of the system on the initial system space $\fH_0$ (for the action of $X_1(0), \ldots, X_n(0)$), and $\ups$ is the vacuum state \cite{P_1992} for the quantum Wiener process $W$ on $\fF$. In the case of  vacuum fields,  the averaging of both sides of (\ref{dH}) according to (\ref{bE}) yields
\begin{align}
\nonumber
    (\bE H)^{^\centerdot}
    & =
    \frac{1}{2}
    (\bE(X^\rT
    (\wt{A}^\rT R + R\wt{A})
    X)
    +
    \bra R, BB^\rT\ket_\rF)\\
\label{EHdot}
    & =
    \frac{1}{2}
    (\bra
        \wt{A}^\rT R + R\wt{A},
        P
    \ket_\rF
    +
    \bra R, BB^\rT\ket_\rF),
\end{align}
where the martingale part  $X^\rT R B\rd W$ of (\ref{dH}) does not contribute to the expectation, and use is made of the real covariances
\begin{equation*}
\label{P}
  P:= \Re \bE (XX^\rT)
\end{equation*}
of the system variables. The time-varying matrix $P = P^\rT \in \mR^{n\x n}$ satisfies
\begin{equation}
\label{PT}
    P+i\Theta = \bE(XX^\rT) \succcurlyeq 0
\end{equation}
due to the generalised Heisenberg uncertainty principle \cite{H_2001} and is governed by the Lyapunov ODE
\begin{equation}
\label{Pdot}
    \dot{P} = AP + PA^\rT + BB^\rT.
\end{equation}
The latter provides an alternative way to arrive at (\ref{EHdot}) through the relation $\bE H = \frac{1}{2} \bra R, P\ket_\rF$ as
\begin{align}
\nonumber
    (\bE H)^{^\centerdot}
    & =
    \frac{1}{2} \bra R, \dot{P}\ket_\rF\\
\nonumber
    & =
    \frac{1}{2} \bra R, AP + PA^\rT + BB^\rT\ket_\rF\\
\nonumber
    & =
    \frac{1}{2} (\bra A^\rT R + RA,P\ket_\rF + \bra R, BB^\rT\ket_\rF)\\
\label{EHdot1}
    & =
    \frac{1}{2} (\bra \wt{A}^\rT R + R\wt{A},P\ket_\rF + \bra R, BB^\rT\ket_\rF),
\end{align}
where (\ref{Pdot}) is used along with (\ref{ARRA}) and the second equality from (\ref{At}). As seen from the right-hand side of (\ref{EHdot}),   for a given matrix $P$, the quantity $(\bE H)^{^\centerdot}$, which pertains to the rate of work of the external field on the system, depends on the coupling matrix $M$ in a quadratic fashion through the matrices (\ref{At}). In the steady-state regime, when the matrix $A$ is Hurwitz and $P$ is the unique solution
\begin{equation}
\label{P}
    P = \int_{\mR_+} \re^{tA }BB^\rT \re^{tA^\rT} \rd t
\end{equation}
of the algebraic Lyapunov equation (ALE)
\begin{equation}
\label{PALE}
    A P + P A^\rT + BB^\rT = 0,
\end{equation}
it follows from (\ref{EHdot1}) that $(\bE H)^{^\centerdot} = 0$, and hence,
the Hamiltonian $H$, despite not being preserved, has a constant mean value $\bE H$. In this case  (when $A$ is Hurwitz),
$    P+i\Theta = \bE(XX^\rT)
$
is the quantum covariance matrix of the invariant zero mean Gaussian quantum state \cite{KRP_2010} of the system.
This property of OQHOs underlies the generation of Gaussian states through a sufficiently long run of the system \cite{Y_2012}. The dissipation, which is exploited in this  state generation procedure,  comes with decoherence. The latter  results from the system-environment interaction and leads to a ``loss of quantumness'' over time in comparison with the isolated dynamics.

\section{Exponential decay in two-point CCRs}
\label{sec:decay}

The Hurwitz property of the matrix $A$, which secures the steady-state regime discussed above, gives rise to an exponentially fast decay in the
two-point CCRs \cite{VPJ_2018a} for the system variables described by
\begin{equation}
\label{XXCCR}
    [X(s), X(t)^\rT]
    =
    2i\Ups(s-t),
    \qquad
    s,t\>0,
\end{equation}
with
\begin{equation}
\label{Lambda}
    \Ups(\tau)
     :=
    \left\{
    {\small\begin{matrix}
    \re^{\tau A}\Theta & {\rm if}\  \tau\> 0\\
    \Theta\re^{-\tau A^{\rT}} & {\rm if}\  \tau< 0\\
    \end{matrix}}
    \right.,
    \qquad
    \tau \in\mR,
\end{equation}
from which the one-point CCRs (\ref{XCCR}) are obtained as a particular case  at $s=t$ (that is, $\tau=0$) due to
\begin{equation}
\label{Lambda0}
    \Ups(0)= \Theta.
\end{equation}
The CCRs (\ref{XXCCR}), which  are a consequence of the joint commutation structure of the system variables and  the external fields, hold regardless of a particular quantum state and regardless of $A$ being Hurwitz. The limit relation
\begin{equation}
\label{Lam0}
    \lim_{\tau \to \infty}\Ups(\tau) = 0
\end{equation}
(when $A$ is Hurwitz) means that for any $j,k=1, \ldots, n$, the system variables $X_j(s)$ become ``asymptotically commuting'' with (and hence, classical with respect to) $X_k(t)$  at distant moments of time $s$, $t$  (as $|s-t|\to +\infty$). The decay (\ref{Lam0})  in the two-point CCR matrix can be regarded as a manifestation of quantum decoherence in the system caused by its coupling to the environment. Indeed, the forced response term $\int_0^t \re^{(t-s)A}B\rd W(s)$, contributed by the external field $W$ to the system variables
\begin{equation}
\label{XW}
    X(t) = \re^{tA} X(0) + \int_0^t \re^{(t-s)A}B\rd W(s)
\end{equation}
over time $t\> 0$,
and the response $\re^{tA} X(0)$ of the system to the initial condition commute with each other:
\begin{equation}
\label{comm}
    \Big[
        \int_0^t \re^{(t-s)A}B\rd W(s),
        (\re^{tA} X(0))^{\rT}
    \Big] =
    \int_0^t
    \re^{(t-s)A}B
    [\rd W(s), X(0)^\rT]
    \re^{tA^\rT}
    =
    0
\end{equation}
since $X(0)$ and $W$ consist of operators acting on different Hilbert spaces $\fH_0$ and $\fF$, whereby $[W(t), X(0)^\rT] = 0$ for all $t\> 0$. In fact, (\ref{XXCCR}), (\ref{Lambda}) follow from a combination of (\ref{XW}), (\ref{comm}) with (\ref{XCCR}). Therefore, if $A$ is Hurwitz, $X(0)$ has an exponentially decaying effect on $X(t)$, and the forced term in (\ref{XW}) becomes dominant over the course of time. This dissipative behaviour is qualitatively different from the isolated dynamics in the absence of coupling ($M=0$), when the matrix $A$ in (\ref{AB}) reduces to $A_0$ in (\ref{A0}),  which
has a purely imaginary spectrum $\{i\omega_k: k = 1, \ldots, n\}$  in the case of a nonsingular CCR matrix $\Theta$ and positive definite energy matrix $R$:
\begin{equation}
\label{typ}
    \det \Theta \ne 0,
    \qquad
    R \succ 0 .
\end{equation}
The eigenfrequencies $\omega_1, \ldots, \omega_n \in \mR$ (which should not be confused with the spectrum of the Hamiltonian $H$ in (\ref{H}) as a self-adjoint operator on the infinite dimensional system-field Hilbert space $\fH$) are all nonzero and symmetric about the origin, so that, without loss of generality,
\begin{equation}
\label{sym}
    \omega_k=-\omega_{k+\frac{n}{2}}>0,
  \qquad
  k = 1, \ldots, \frac{n}{2}.
\end{equation}
Indeed, under the conditions (\ref{typ}), the matrix  $A_0$ in (\ref{A0}) is isospectral to the nonsingular real antisymmetric matrix $2\sqrt{R}\Theta \sqrt{R}$ whose eigenvalues are purely imaginary and symmetric about the origin \cite{HJ_2007}. This isospectrality follows from the similarity transformation
\begin{equation}
\label{AR}
    A_0 = R^{-1/2} (2\sqrt{R}\Theta \sqrt{R}) \sqrt{R}
\end{equation}
(which is used, for example, in \cite{P_2014}; see also the proof of Williamson's symplectic diagonalization  theorem \cite{W_1936,W_1937} in \cite[pp. 244--245]{D_2006}). Moreover, (\ref{AR}) implies that the matrix
$A_0$ is diagonalisable as
\begin{equation}
\label{AV}
    A_0 = i S \mho S^{-1},
    \qquad
    S := R^{-1/2}V,
    \qquad
    \mho := \diag_{1\< k\< n} (\omega_k).
\end{equation}
Here, $S^{-1} = V^* \sqrt{R}$, with $(\cdot)^*:= {\overline{(\cdot)}}^\rT$ the complex conjugate transpose of a matrix,    and
\begin{equation}
\label{V}
    V
    :=
    \begin{bmatrix}
        v_1 & \ldots & v_n
    \end{bmatrix}
    \in \mC^{n\x n}
\end{equation}
is a unitary matrix
whose columns $v_1,\ldots, v_n\in \mC^n$ are the eigenvectors of the Hermitian matrix \begin{equation}
\label{VV}
    -2i\sqrt{R}\Theta \sqrt{R} = V\mho V^*
\end{equation}
with the eigenvalues $\omega_1, \ldots, \omega_n$,
\begin{equation}
\label{veig}
    -2i\sqrt{R}\Theta \sqrt{R} v_k = \omega_k v_k,
    \qquad
    k = 1, \ldots, n,
\end{equation}
satisfying
\begin{equation}
\label{vsym}
    v_{k+\frac{n}{2}} = \overline{v_k},
    \qquad
    1\< k \< \frac{n}{2} ,
\end{equation}
in accordance with (\ref{sym}). 
The corresponding matrix exponentials
$$
    \re^{\tau A_0}
    =
    S\re^{i\tau  \mho} S^{-1},
    \qquad
    \re^{-\tau A_0^\rT}
    =
    S^{-\rT}
    \re^{-i\tau  \mho}
    S^\rT
$$
(with $(\cdot)^{-\rT}:= ((\cdot)^{-1})^\rT$)
in (\ref{Lambda}), associated with (\ref{AV}),  are oscillatory functions of time $\tau$, with
\begin{equation}
\label{T*}
  T :=
  \frac{2\pi}{\min_{1 \< k \< \frac{n}{2}} \omega_k} >0
\end{equation}
being the largest period of the oscillatory modes, since $\re^{i \tau \mho} = \diag_{1\< k\< n} (\re^{i\omega_k\tau})$.
Therefore, in the case of (\ref{typ}), only a nonzero coupling matrix $M$ can make the matrix $A = A_0+\wt{A}$ in (\ref{AB}) Hurwitz (thus leading to (\ref{Lam0})) through the matrix $\wt{A}$ in (\ref{At}) 
in view of (\ref{A0}). If the matrix $A$ is Hurwitz, a ``typical'' time constant of the exponential decay in the two-point CCR matrix (\ref{Lambda}) provides a measure of quantum decoherence in the system and can be defined, for example,  as
\begin{equation}
\label{taud}
    \tau_*
    :=
    \inf
    \Big\{
        \tau>0:\
        \|\re^{\tau A}\Theta\|_\rF \< \frac{1}{\re}\|\Theta\|_\rF
    \Big\}.
\end{equation}
Here, use is made of the Frobenius norm  $\|\Theta\|_\rF = \sqrt{-\Tr (\Theta^2)}$ of the one-point antisymmetric CCR matrix  $\Theta$ from  (\ref{Lambda0}), although a different matrix norm can also be used for this purpose.

An upper bound for the decoherence time (\ref{taud}) can be obtained through algebraic Lyapunov inequalities.
More precisely, for any
\begin{equation}
\label{mu1}
    0< \lambda < -\ln \br(\re^A) = -\max_{1\< k \< n}\Re \lambda_k
\end{equation}
(with $\br(\cdot)$ the spectral radius, and $\lambda_1, \ldots, \lambda_n$ the eigenvalues of $A$), there exists a positive definite matrix $\Gamma = \Gamma^\rT \in \mR^{n\x n}$ such that
\begin{equation}
\label{mu2}
    A\Gamma + \Gamma A^\rT + 2\lambda \Gamma
    \prec 0.
\end{equation}
Since the second inequality in (\ref{mu1}) is equivalent to $A+\lambda I_n$ being Hurwitz, all the matrices $\Gamma$,  satisfying (\ref{mu2}), can be represented as the solutions
\begin{equation}
\label{N}
    \Gamma
    =
    \int_{\mR_+}
    \re^{2\lambda t}
    \re^{tA}
    N
    \re^{tA^\rT}
    \rd t
\end{equation}
of the ALEs
$$
    (A+\lambda I_n)\Gamma + \Gamma (A+\lambda I_n)^\rT + N = 0
$$
(by analogy with (\ref{P}), (\ref{PALE})),
involving arbitrary positive definite matrices $N = N^\rT \in \mR^{n\x n}$.
Similarly to \cite[Proof of Theorem 6 on p. 122]{VPJ_2018a},  (\ref{mu2}) implies the contraction property
$$
    \|G_\tau \| \< \re^{-\lambda\tau},
    \qquad
    \tau> 0
$$
for the matrix
$$
    G_\tau:=
    \Gamma^{-1/2}\re^{\tau A}\sqrt{\Gamma}
$$
in the sense of the operator matrix norm $\|\cdot\|$.
Hence,
\begin{align*}
    \|\re^{\tau A}\Theta\|_\rF^2
    & =
    \|\sqrt{\Gamma}  G_\tau \Gamma^{-1/2}\Theta\|_\rF^2\\
    & =
    \Tr ((\Gamma^{-1/2}\Theta )^\rT
    G_\tau^\rT
    \Gamma
    G_\tau
    \Gamma^{-1/2}\Theta)\\
    &
    \<
    \lambda_{\max}(\Gamma)
    \Tr ((\Gamma^{-1/2}\Theta )^\rT
    G_\tau^\rT
    G_\tau
    \Gamma^{-1/2}\Theta)\\
    &
    \<
    \|\Gamma\|
        \|G_\tau\|^2
    \Tr ((\Gamma^{-1/2}\Theta )^\rT
    \Gamma^{-1/2}\Theta)\\
    & \<
    \re^{-2\lambda\tau}
    \|\Gamma\|
    \|\Gamma^{-1/2}\Theta\|_\rF^2
\end{align*}
(where $\lambda_{\max}(\cdot)$ is the largest eigenvalue of a matrix with a real spectrum),
so that
$    \|\re^{\tau A}\Theta\|_\rF \<     \re^{-\lambda\tau}
    \sqrt{\|\Gamma\|}
    \|\Gamma^{-1/2}\Theta\|_\rF
$,
and  the decoherence time (\ref{taud}) admits an upper bound
\begin{equation}
\label{tau*max}
    \tau_*
    \<
    \frac{1}{\lambda}
    \Big(
        1
        +
        \ln
        \Big(
            \frac{1}
            {\|\Theta\|_\rF}
            \sqrt{\|\Gamma\|}\|\Gamma^{-1/2}\Theta\|_\rF
        \Big)
    \Big).
\end{equation}
In view of the parameterisation (\ref{N}) for the matrix $\Gamma$, the bound (\ref{tau*max}) can be tightened by minimising its right-hand side over the pairs $(\lambda,N)$ such that $\lambda$  satisfies (\ref{mu1}) and $N\succ 0$ is normalised, for example, as $\Tr N = 1$. The normalisation does not affect the resulting minimum value due to the linearity of the map $N \mapsto \Gamma$ and the invariance of the quantity $\sqrt{\|\Gamma\|}\|\Gamma^{-1/2}\Theta\|_\rF$ in (\ref{tau*max}) with respect to the scaling transformation $\Gamma\mapsto \sigma \Gamma$ for any $\sigma>0$.

\section{Asymptotic weak-coupling decoherence estimates}
\label{sec:asy}

With the decoherence part $\wt{A}$ 
of the matrix  $A$ in (\ref{At}) depending on $M$ in a quadratic fashion,  this homogeneity can be taken into account by considering a weak-coupling formulation
\begin{equation}
\label{Meps}
    M_\eps := \eps \sM,
\end{equation}
where $\eps \> 0$ is a small scaling factor which quantifies the coupling strength, while $\sM \in \mR^{m\x n}$ specifies the coupling ``shape''. Accordingly, the matrices (\ref{AB}) acquire dependence on $\eps$ as
\begin{equation}
\label{ABeps}
    A_\eps
    =
    A_0 + \wt{A}_\eps,
    \quad
    \wt{A}_\eps
    =
    \eps^2 \wt{A}_1,
    \quad
    \wt{A}_1:= 2\Theta \sM^\rT J \sM,
     \quad
     B_\eps
     =
     \eps \sB,
     \quad
     \sB := 2\Theta \sM^\rT,
\end{equation}
where $A_0$ is given by (\ref{A0}). 
The following theorem is concerned with the asymptotic behaviour of the spectrum of the dynamics matrix $A_\eps$.

\begin{thm}
\label{th:asy}
Suppose the CCR matrix $\Theta$ and the energy matrix $R$ of the OQHO satisfy (\ref{typ}). Also, suppose the eigenfrequencies of the matrix $A_0$ from (\ref{A0})  are   pairwise different:
\begin{equation}
\label{omdiff}
  \omega_j \ne \omega_k,
  \qquad
  1\< j\ne  k \< n.
\end{equation}
Then for all sufficiently small values of the coupling strength parameter $\eps$, the matrix $A_\eps$ in (\ref{ABeps}) has different eigenvalues $\lambda_1(\eps), \ldots, \lambda_n(\eps)$ which, appropriately numbered, behave asymptotically as
\begin{equation}
\label{eigasy}
  \lambda_k(\eps)
  =
  \omega_k(i - \eps^2\mu_k) + o(\eps^2),
  \qquad
  {\rm as}\
  \eps \to 0+.
\end{equation}
Here,
\begin{equation}
\label{mu}
  \mu_k:= -i v_k^* R^{-1/2} \sM^\rT J \sM R^{-1/2} v_k,
  \qquad
  k = 1, \ldots, n,
\end{equation}
are real-valued quantities involving
the orthonormal eigenvectors $v_k$ from (\ref{veig}) and the coupling shape matrix $\sM$ in (\ref{Meps}). \hfill$\square$
\end{thm}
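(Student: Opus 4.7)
The plan is to exploit the diagonalization (\ref{AV})--(\ref{VV}) of the unperturbed generator $A_0$ and apply first-order analytic perturbation theory for simple eigenvalues to $A_\eps = A_0 + \eps^2 \wt{A}_1$ in (\ref{ABeps}), with $\eps^2$ playing the role of the small parameter. The pairwise distinctness hypothesis (\ref{omdiff}) ensures that the eigenvalues $i\omega_1, \ldots, i\omega_n$ of $A_0$ are algebraically simple, so classical results on spectrum perturbation (see \cite{M_1985}) provide $n$ continuous eigenvalue branches $\lambda_k(\eps)$ of $A_\eps$ for $\eps$ in a neighbourhood of zero, each with $\lambda_k(0) = i\omega_k$, and the first-order coefficient in $\eps^2$ is the corresponding diagonal entry of the perturbation expressed in the eigenbasis of $A_0$.

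Concretely, I would conjugate by $S = R^{-1/2}V$ in (\ref{AV}) to obtain
\begin{equation*}
    S^{-1} A_\eps S
    =
    i\mho + \eps^2\, S^{-1}\wt{A}_1 S,
\end{equation*}
whereupon the standard first-order formula yields
\begin{equation*}
    \lambda_k(\eps)
    =
    i\omega_k + \eps^2 (S^{-1}\wt{A}_1 S)_{kk} + o(\eps^2).
\end{equation*}
Using $S^{-1} = V^*\sqrt{R}$, the columns of $V$ in (\ref{V}), and $\wt{A}_1 = 2\Theta \sM^\rT J \sM$ from (\ref{ABeps}), the diagonal entry expands as
\begin{equation*}
    (S^{-1}\wt{A}_1 S)_{kk}
    =
    2\, v_k^*\sqrt{R}\,\Theta\, \sM^\rT J \sM R^{-1/2}v_k.
\end{equation*}
The key algebraic manoeuvre is to absorb the factor $\sqrt{R}\,\Theta$ via (\ref{veig}): taking the conjugate transpose of (\ref{veig}) and exploiting the Hermiticity in (\ref{VV}) yields $v_k^*\sqrt{R}\,\Theta = \tfrac{i\omega_k}{2}\, v_k^* R^{-1/2}$, whence
\begin{equation*}
    (S^{-1}\wt{A}_1 S)_{kk}
    =
    i\omega_k\, v_k^* R^{-1/2}\sM^\rT J \sM R^{-1/2}v_k
    =
    -\omega_k \mu_k
\end{equation*}
by the definition (\ref{mu}) of $\mu_k$. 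Substituting this into the first-order formula gives the claimed expansion (\ref{eigasy}).

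To complete the argument, the real-valuedness of $\mu_k$ follows since $J^\rT = -J$ makes $iJ$ Hermitian, so $v_k^* R^{-1/2}\sM^\rT(iJ)\sM R^{-1/2}v_k \in \mR$ and $\mu_k$ is its negative. The mutual distinctness of $\lambda_1(\eps), \ldots, \lambda_n(\eps)$ for all sufficiently small $\eps>0$ is inherited, by continuity, from the distinctness of the unperturbed values $i\omega_1, \ldots, i\omega_n$, independently of what the first-order corrections turn out to be. The main obstacle is modest in scope and purely algebraic, namely extracting the identity $v_k^*\sqrt{R}\,\Theta = \tfrac{i\omega_k}{2} v_k^* R^{-1/2}$ from (\ref{veig}) and tracking the factors of $i$; all the perturbation-theoretic input is classical for diagonalizable matrices with simple spectrum.
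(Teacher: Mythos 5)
Your proposal is correct and follows essentially the same route as the paper: conjugate $A_\eps$ by $S=R^{-1/2}V$, treat $\eps^2$ as the perturbation parameter for the simple eigenvalues $i\omega_k$, and identify the first-order coefficient with the diagonal entry $-\omega_k\mu_k$ of the transformed perturbation, with reality of $\mu_k$ from the Hermiticity of $iJ$. The only cosmetic difference is that you extract the left-eigenvector identity $v_k^*\sqrt{R}\,\Theta=\tfrac{i\omega_k}{2}v_k^*R^{-1/2}$ from (\ref{veig}), whereas the paper inserts $\sqrt{R}R^{-1/2}$ and uses the full eigendecomposition (\ref{VV}) — the computations coincide.
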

\begin{proof}
The property that the eigenvalues $\lambda_1(\eps), \ldots, \lambda_n(\eps)$ of the matrix $A_\eps$ are pairwise different for all sufficiently small $\eps$ (that is, $\eps \in [0, \delta)$ for some $\delta>0$) follows from (\ref{omdiff}) and  the Gershgorin localisation theorem \cite{H_2008,HJ_2007}. This also allows them to be numbered in such a way that each eigenvalue  $\lambda_k(\eps)$ inherits the infinite differentiability from $A_\eps$ in (\ref{ABeps}) over  $\eps \in [0, \delta)$. In view of the similarity transformation in (\ref{AV}),  the matrix $A_\eps$ is isospectral to
\begin{align}
\nonumber
    S^{-1}A_\eps S
    & =
    i \mho  + 2\eps^2 S^{-1}\Theta \sM^\rT J \sM S \\
\nonumber
    & =
    i \mho  + 2\eps^2 V^*\sqrt{R}\Theta \sM^\rT J \sM R^{-1/2}V\\
\nonumber
    & =
    i \mho  + 2\eps^2 V^*\sqrt{R}\Theta\sqrt{R}R^{-1/2} \sM^\rT J \sM R^{-1/2}V\\
\nonumber
    & =
    i \mho  + i\eps^2 V^*V\mho V^*R^{-1/2} \sM^\rT J \sM R^{-1/2}V\\
\label{SS}
    & =
    i \mho  + i\eps^2 \mho V^*R^{-1/2} \sM^\rT J \sM R^{-1/2}V,
\end{align}
where use is also made of (\ref{V}), (\ref{VV}). Since the matrix $\mho$ in (\ref{AV}) is diagonal, and its diagonal entries $\omega_k$ are pairwise different in view of (\ref{omdiff}), then by applying the spectrum perturbation results (see, for example, \cite{M_1985} and references therein), it follows from (\ref{SS}) that the diagonal entries
\begin{equation}
\label{kk}
    (i\mho V^*R^{-1/2} \sM^\rT J \sM R^{-1/2}V)_{kk}
    =
    i\omega_k v_k^*R^{-1/2} \sM^\rT J \sM R^{-1/2}v_k
    =
    -\omega_k \mu_k
\end{equation}
specify the coefficients of the linear terms in the asymptotic expansions of $\lambda_k(\eps)$ over the powers of $\eps^2$ as described in (\ref{eigasy}). Here, the matrix $i V^* R^{-1/2} \sM^\rT J \sM R^{-1/2}V$ is Hermitian, and hence, its diagonal entries  $i v_k^*R^{-1/2} \sM^\rT J \sM R^{-1/2}v_k$ are real, and so also are the quantities $\mu_k$ in  (\ref{mu}).
\end{proof}

The following theorem is a corollary of Theorem~\ref{th:asy} and provides stability conditions  for the weakly coupled system.

\begin{thm}
\label{th:stab}
Suppose the assumptions of Theorem~\ref{th:asy} are satisfied and the eigenfrequencies $\omega_k$ of the matrix $A_0$ in  (\ref{A0}) are arranged according to (\ref{sym}). Then the fulfillment of the inequalities
\begin{equation}
\label{mupos}
  \mu_k>0,
  \qquad
  k=1, \ldots, \frac{n}{2},
\end{equation}
is sufficient (and in nonstrict form, necessary) for the matrix $A_\eps$ in (\ref{ABeps}) to be  Hurwitz for all $\eps> 0$ small enough. \hfill$\square$
\end{thm}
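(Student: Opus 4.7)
The plan is to read off from Theorem~\ref{th:asy} the sign of $\Re\lambda_k(\eps)$ for small $\eps>0$, and then to exploit the reality pairing (\ref{sym}), (\ref{vsym}) to cut the number of conditions in half.

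First, I would combine (\ref{eigasy}) with the identity $\lambda_k(\eps)=i\omega_k-\eps^2\omega_k\mu_k+o(\eps^2)$ to get
\begin{equation*}
  \Re \lambda_k(\eps) = -\eps^2\,\omega_k\mu_k + o(\eps^2),
  \qquad \eps\to 0+,
\end{equation*}
where the $\mu_k$ are real by Theorem~\ref{th:asy}. Since the eigenvalues $\lambda_1(\eps),\ldots,\lambda_n(\eps)$ are distinct and depend smoothly on $\eps$ for $\eps\in[0,\delta)$, the Hurwitz property of $A_\eps$ is equivalent to $\Re\lambda_k(\eps)<0$ for every $k$, which for all sufficiently small $\eps>0$ is governed, via the above expansion, by the strict inequalities $\omega_k\mu_k>0$ (sufficiency) and, with the $o(\eps^2)$ remainder absorbing any vanishing contributions, by the non-strict inequalities $\omega_k\mu_k\geq 0$ (necessity).

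The key observation I would then establish is the symmetry
\begin{equation*}
  \mu_{k+\frac{n}{2}} = -\mu_k,\qquad k=1,\ldots,\tfrac{n}{2},
\end{equation*}
so that, paired with (\ref{sym}), one has $\omega_{k+\frac{n}{2}}\mu_{k+\frac{n}{2}} = (-\omega_k)(-\mu_k)=\omega_k\mu_k$ for $k\leq n/2$. Given this, the $n$ sign conditions $\omega_k\mu_k>0$ collapse to the $n/2$ conditions $\mu_k>0$ stated in (\ref{mupos}), since $\omega_k>0$ for $k\leq n/2$. To prove the symmetry, I would set $u_k:=\sM R^{-1/2}v_k\in\mC^m$, rewrite $\mu_k=-iu_k^*Ju_k$, use $u_{k+\frac{n}{2}}=\overline{u_k}$ coming from (\ref{vsym}), and compute
\begin{equation*}
  \mu_{k+\frac{n}{2}} = -i\,u_k^{\rT} J\,\overline{u_k}
  = \overline{-i\,u_k^{\rT} J\,\overline{u_k}}\,\big|_{\text{take c.c.}}
\end{equation*}
using the reality of $\mu_{k+n/2}$ together with $\overline{J}=J$ and $J^\rT=-J$ (since $\bJ$, and hence $J$ in (\ref{JJ}), is real antisymmetric), which turns the conjugated scalar into $+iu_k^*Ju_k=-\mu_k$.

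The only delicate point is handling the remainder $o(\eps^2)$ correctly when $\mu_k=0$ for some $k\leq n/2$, which is why the converse is asserted only in the non-strict form $\mu_k\geq 0$: if some $\mu_k>0$ fails, the strict inequality $\Re\lambda_k(\eps)<0$ cannot be guaranteed on any right-neighbourhood of $\eps=0$ by the leading-order expansion, whereas if all $\mu_k>0$, the positive quadratic term dominates the error uniformly in $k=1,\ldots,n$ on a (possibly smaller) interval $(0,\delta')$, giving the Hurwitz conclusion. This last uniformity is the main—but essentially routine—obstacle, handled by taking the minimum of the $n/2$ individual thresholds.
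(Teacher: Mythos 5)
Your argument is correct and follows essentially the same route as the paper's proof: extract $\Re\lambda_k(\eps)=-\eps^2\omega_k\mu_k+o(\eps^2)$ from Theorem~\ref{th:asy}, deduce sufficiency of $\omega_k\mu_k>0$ and necessity of the nonstrict version via the limit of $\Re\lambda_k(\eps)/\eps^2$, and use the conjugation symmetry (\ref{vsym}) (with $J$ real and the $\mu_k$ real) to get $\mu_{k+\frac{n}{2}}=-\mu_k$, hence $\omega_{k+\frac{n}{2}}\mu_{k+\frac{n}{2}}=\omega_k\mu_k$, reducing the $n$ conditions to (\ref{mupos}). No gaps worth flagging.
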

\begin{proof}
From (\ref{eigasy}), it follows that
\begin{equation}
\label{Reeigasy}
  \Re \lambda_k(\eps)
  =
  - \omega_k\mu_k \eps^2 + o(\eps^2),
  \qquad
  {\rm as}\
  \eps \to 0+.
\end{equation}
Hence, the fulfillment of the inequalities
\begin{equation}
\label{ommu1}
    \omega_k \mu_k>0,
    \qquad
    k=1, \ldots, n,
\end{equation}
is sufficient for $A_\eps$ to be Hurwitz for all $\eps> 0$ small enough, while the nonstrict inequalities
\begin{equation}
\label{ommu2}
    \omega_k\mu_k = -\lim_{\eps\to 0} \frac{\Re \lambda_k(\eps)}{\eps^2}\> 0,
    \qquad
    k=1, \ldots, n,
\end{equation}
provide a necessary condition for the Hurwitz property. With the eigenvectors $v_k$ of the matrix (\ref{VV}) satisfying (\ref{vsym}) due to the convention (\ref{sym}), it follows for the quantities (\ref{mu}) that
\begin{align*}
  \mu_{k+\frac{n}{2}}
  & = -i (\overline{v_k})^* R^{-1/2} \sM^\rT J \sM R^{-1/2} \overline{v_k}\\
  & = \overline{i v_k^* R^{-1/2} \sM^\rT J \sM R^{-1/2} v_k}
  =-\mu_k,
\end{align*}
whereby
\begin{equation}
\label{ommu3}
    \omega_{k+\frac{n}{2}}\mu_{k+\frac{n}{2}} = \omega_k\mu_k,
    \qquad
  k = 1, \ldots, \frac{n}{2}.
\end{equation}
Since $\sgn(\omega_k\mu_k) = \sgn \mu_k$ for all $  k = 1, \ldots, \frac{n}{2}$ in view of (\ref{sym}), then (\ref{ommu3}) reduces the inequalities (\ref{ommu1}) to (\ref{mupos}),  and a similar reduction holds for their nonstrict versions in  (\ref{ommu2}).
\end{proof}

Under the conditions of Theorem~\ref{th:stab}, it follows from a combination of (\ref{Reeigasy}) with (\ref{ommu3}) that
\begin{equation}
\label{lead}
    \ln \br(\re^{A_\eps})
    =
    \max_{1\< k \< n}
    \Re \lambda_k(\eps)
    =
    - \eps^2 \min_{1\< k \< \frac{n}{2}}(\omega_k\mu_k) + o(\eps^2),
  \qquad
  {\rm as}\
  \eps \to 0+.
\end{equation}
The leading Lyapunov exponent term $\eps^2 \min_{1\< k \< \frac{n}{2}}(\omega_k\mu_k)$ (taken with the opposite sign) provides an asymptotically accurate approximation for the right-hand side of (\ref{mu1}) and also an estimate
\begin{equation}
\label{tauhat}
    \wh{\tau}
    :=
    \frac{\eps^{-2}}
    {\min_{1\< k \< \frac{n}{2}}(\omega_k\mu_k)}
\end{equation}
for a decay time, which is different from yet related to (\ref{taud}). The decoherence time estimate (\ref{tauhat}) allows the coupling strength parameter $\eps$ to be chosen so as to enable all the oscillatory modes (of the uncoupled oscillator)  to manifest themselves in a large number of cycles before the decay sets in. This requirement takes the form $\wh{\tau}\gg T$ in terms of the period (\ref{T*}) and leads to
\begin{equation}
\label{epsmax}
    \eps
    \ll
    \sqrt{
    \frac
    {\min_{1 \< k \< \frac{n}{2}} \omega_k}
    {2\pi\min_{1\< k \< \frac{n}{2}}(\omega_k\mu_k)}}
    =:
    \wh{\eps}
\end{equation}
as an asymptotic threshold on the coupling strength for the OQHO to preserve the quantum dynamic features of its isolated counterpart. Since
\begin{equation}
\label{mumax}
    \wh{\eps}
    \>
    \frac{1}{\sqrt{2\pi\max_{1\< k \< \frac{n}{2}}\mu_k}}
    =:
    \wt{\eps},
\end{equation}
in view of
$$
    \min_{1\< k \< \frac{n}{2}}(\omega_k\mu_k)
    \<
    \min_{1\< k \< \frac{n}{2}}\omega_k
    \max_{1\< k \< \frac{n}{2}}\mu_k     ,
$$
the right-hand side of (\ref{mumax}) provides a more stringent threshold on $\eps$. For a one-mode oscillator ($n=2$)  with $\mu_1>0$, the eigenfrequency $\omega_1>0$ cancels out in (\ref{epsmax}), so that  both thresholds 
are the same and reduce to
\begin{equation}
\label{eps<<}
    \eps
    \ll
    \frac{1}{\sqrt{2\pi\mu_1}}.
\end{equation}
These asymptotic estimates will be compared with exact results for one- and two-mode oscillators in Sections~\ref{sec:one}, \ref{sec:num}.

\section{Asymptotic behaviour of the invariant covariances}
\label{sec:cov}

Whereas low decoherence is important for isolating the oscillator from the   environment, this requirement conflicts with accelerating the  convergence to the invariant quantum state through enhanced dissipation, which, as mentioned at the end of Section~\ref{sec:sys},  underlies the Gaussian state generation. The system-field coupling influences not only the convergence rate (\ref{lead}), but also the Gaussian invariant state itself.   To this end, of interest is the asymptotic behaviour of the real part of the invariant covariance matrix
in (\ref{P}),
\begin{equation}
\label{Peps}
    P_\eps
    =
    \int_{\mR_+}
    \re^{tA_\eps }B_\eps B_\eps ^\rT \re^{tA_\eps^\rT} \rd t
    =
    \eps^2
    \int_{\mR_+}
    \re^{tA_\eps }\sB \sB^\rT \re^{tA_\eps^\rT} \rd t
\end{equation}
as $\eps\to 0+$, where the matrices $A_\eps$, $B_\eps$  are given by (\ref{ABeps}),  with the ALE (\ref{PALE}) taking the form
\begin{equation}
\label{PepsALE}
    A_\eps P_\eps + P_\eps A_\eps^\rT + \eps^2 \sB\sB^\rT = 0.
\end{equation}
\begin{thm}
\label{th:Plim}
Suppose the assumptions of Theorems~\ref{th:asy}, \ref{th:stab}, are satisfied along with the condition (\ref{mupos}). Then the matrix $P_\eps$ in (\ref{Peps}) has the limit
\begin{equation}
\label{Plim}
    \Pi
    :=
      \lim_{\eps \to 0+}
    P_\eps
    =
    R^{-1/2}
    \sum_{k=1}^{n/2}
    \frac{1}{\omega_k\mu_k}
    |\sB^\rT \sqrt{R} v_k|^2
    \Re
    (
    v_k  v_k^*
    )
    R^{-1/2},
\end{equation}
computed in terms of the energy matrix $R$, the eigendata from (\ref{V})--(\ref{vsym}), the matrix $\sB$ from (\ref{ABeps}) and the quantities $\mu_k$ defined in  (\ref{mu}). \hfill$\square$
\end{thm}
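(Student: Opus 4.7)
The plan is to exploit the diagonalisation $A_0 = iS\mho S^{-1}$ with $S = R^{-1/2}V$ from (\ref{AV}), (\ref{VV}) to recast the ALE (\ref{PepsALE}) in an eigenbasis where it essentially decouples mode-by-mode. Setting $Q_\eps := S^{-1} P_\eps S^{-*}$ (which is Hermitian because $P_\eps = P_\eps^\rT$ is real) and using the identity (\ref{SS}) for $S^{-1} A_\eps S$, the ALE becomes
$$
    \Lambda_\eps Q_\eps + Q_\eps \Lambda_\eps^* + \eps^2 D = 0,
    \qquad
    \Lambda_\eps := i\mho(I_n + \eps^2 C),
    \qquad
    D := S^{-1}\sB\sB^\rT S^{-*},
$$
where $C := V^* R^{-1/2}\sM^\rT J \sM R^{-1/2}V$ is anti-Hermitian (since $\sM^\rT J\sM$ is real antisymmetric and $V$ unitary), so that $C_{kk} = i\mu_k$ in view of (\ref{mu}), and $D_{kk} = v_k^* R^{1/2}\sB\sB^\rT R^{1/2} v_k = |\sB^\rT \sqrt{R} v_k|^2$.

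Entry-wise, the transformed ALE reads
$$
    i(\omega_j - \omega_k) q_{jk}(\eps)
    +
    i\eps^2\bigl[\omega_j (CQ_\eps)_{jk} + \omega_k (Q_\eps C)_{jk}\bigr]
    +
    \eps^2 d_{jk}
    =
    0.
$$
For $j\ne k$, assumption (\ref{omdiff}) gives $\omega_j - \omega_k\ne 0$, so once boundedness of $Q_\eps$ is established, $q_{jk}(\eps) = O(\eps^2)$. For $j=k$, the leading term vanishes, and dividing by $\eps^2$ and using $q_{k\ell} = \overline{q_{\ell k}}$, $c_{\ell k} = -\overline{c_{k\ell}}$ to simplify the bracket as $2i\Im(CQ_\eps)_{kk}$, together with $c_{kk} = i\mu_k$ and $q_{kk}\in\mR$, yields
$$
    \mu_k q_{kk}(\eps)
    +
    \Im \sum_{\ell \ne k} c_{k\ell} q_{\ell k}(\eps)
    =
    \frac{d_{kk}}{2\omega_k}.
$$
Combined with the off-diagonal bound, this gives $q_{kk}(\eps) \to \frac{d_{kk}}{2\omega_k\mu_k}$, while $q_{jk}(\eps) \to 0$ for $j\ne k$, so $Q_\eps \to \diag_k\bigl(\tfrac{d_{kk}}{2\omega_k\mu_k}\bigr)$.

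Translating back via $P_\eps = R^{-1/2} V Q_\eps V^* R^{-1/2}$ produces $\Pi = R^{-1/2}\sum_{k=1}^{n} \tfrac{|\sB^\rT\sqrt{R}v_k|^2}{2\omega_k\mu_k} v_k v_k^* R^{-1/2}$, and I then fold this by pairing $k$ with $k+\tfrac{n}{2}$: by (\ref{sym}), (\ref{vsym}), (\ref{ommu3}) and the reality of $\sB,\sqrt{R}$, one has $\omega_{k+n/2}\mu_{k+n/2} = \omega_k\mu_k$, $d_{k+n/2,k+n/2} = d_{kk}$, and $v_{k+n/2}v_{k+n/2}^* = \overline{v_k v_k^*}$, which collapses the sum into the claimed formula (\ref{Plim}) with $\Re(v_k v_k^*)$. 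The main obstacle is the \emph{a priori} boundedness of $Q_\eps$ as $\eps \to 0+$, which is delicate because $A_0$ is not Hurwitz so the Lyapunov operator is singular in the limit; I plan to address this by combining the self-referential off-diagonal estimate (which inherits a factor $\eps^2$) with the diagonal equation (which, under (\ref{mupos}), pins down $q_{kk}$ at a determinate finite value), and cross-checking via the representation $P_\eps = \eps^2 \int_{\mR_+}\re^{tA_\eps}\sB\sB^\rT \re^{tA_\eps^\rT}\rd t$, whose scaling is balanced by the $O(\eps^2)$ decay rate of $\re^{tA_\eps}$ established in (\ref{lead}).
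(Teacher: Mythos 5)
Your proposal is correct in substance, but it takes a genuinely different technical route from the paper. You work throughout in the \emph{fixed} basis $S=R^{-1/2}V$ that diagonalises only $A_0$, so your transformed Lyapunov equation retains the $\eps^2$-coupling matrix $C$ and you must close a perturbative argument: off-diagonal entries satisfy a self-referential bound with a factor $\eps^2/|\omega_j-\omega_k|$, while the diagonal equation $\mu_k q_{kk}+\Im\sum_{\ell\ne k}c_{k\ell}q_{\ell k}=d_{kk}/(2\omega_k)$ pins the diagonal. The paper instead diagonalises $A_\eps$ itself, using an $\eps$-dependent eigenvector matrix $S_\eps$ chosen smoothly with $S_\eps\to S$; then the transformed ALE decouples \emph{exactly} and yields the closed form $q_{jk}(\eps)=-\eps^2(S_\eps^{-1}\sB\sB^\rT S_\eps^{-*})_{jk}/(\lambda_j(\eps)+\overline{\lambda_k(\eps)})$, from which boundedness and the limits $q_{jk}\to \delta_{jk}(S^{-1}\sB\sB^\rT S^{-*})_{kk}/(2\omega_k\mu_k)$ are immediate via the expansions of Theorem~\ref{th:asy}. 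What your route buys is that you never need existence or smoothness of the perturbed eigenvectors (only of the eigenvalues through $\mu_k$, via $c_{kk}=i\mu_k$); what it costs is exactly the a priori bound on $Q_\eps$ that you flag. That gap does close along the lines you sketch, and you should say so explicitly rather than ``plan to'': since $A_\eps$ is Hurwitz for small $\eps>0$ (Theorem~\ref{th:stab}), $Q_\eps$ is finite for each such $\eps$, and writing $a:=\max_k|q_{kk}|$, $b:=\max_{j\ne k}|q_{jk}|$, the diagonal equations give $a\< C_1(1+b)$ (using $\mu_k\ne0$ for all $k$, which follows from (\ref{mupos}) and $\mu_{k+n/2}=-\mu_k$), while the off-diagonal equations give $b\< C_2\eps^2(1+a+b)$ with constants independent of $\eps$; substituting yields $b=O(\eps^2)$ and $a=O(1)$, so $Q_\eps\to\diag_k\big(d_{kk}/(2\omega_k\mu_k)\big)$ as claimed, and your final folding of the sum over conjugate pairs via (\ref{sym}), (\ref{vsym}), (\ref{ommu3}) correctly reproduces (\ref{Plim}). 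One cosmetic caution: your $\Lambda_\eps:=i\mho(I_n+\eps^2C)$ is not diagonal, so the opening claim that the ALE ``essentially decouples'' should be read as approximate decoupling up to $O(\eps^2)$ coupling terms — the exact decoupling is precisely what the paper's $\eps$-dependent eigenbasis achieves.
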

\begin{proof}
Under the conditions of Theorems~\ref{th:asy}, \ref{th:stab}, there exists a $\delta>0$ such that for all $\eps\in [0, \delta)$, the matrix $A_\eps$ has pairwise different eigenvalues $\lambda_1(\eps), \ldots, \lambda_n(\eps)$ in  (\ref{eigasy}) and  is diagonalisable as
\begin{equation}
\label{SAS}
  A_{\eps} = S_{\eps} \Lambda_\eps S_\eps^{-1},
\end{equation}
where $S_\eps \in \mC^{n\x n}$ is a nonsingular matrix whose columns are the corresponding eigenvectors, and
\begin{equation}
\label{Lam}
  \Lambda_\eps := \diag_{1\< k \< n} (\lambda_k(\eps)).
\end{equation}
The matrix $S_\eps$ (which is defined up to an arbitrary nonsingular diagonal right factor) can be chosen so as to inherit  from $A_\eps$ the infinite differentiability over $\eps \in [0, \delta)$ and to have the matrix $S$ from (\ref{AV}) as the limit:
\begin{equation}
\label{SSS}
    S_0
    =
    \lim_{\eps \to 0+} S_\eps = S.
\end{equation}
Since $A_\eps^\rT = A_\eps^* = S_{\eps}^{-*} \overline{\Lambda_\eps} S_\eps^*$ in view of (\ref{SAS}),   with $(\cdot)^{-*} := ((\cdot)^{-1})^*$, the ALE (\ref{PepsALE}) can be represented as
\begin{equation}
\label{QepsALE}
    \Lambda_\eps Q_\eps
    +
    Q_\eps \overline{\Lambda_\eps}
    +
    \eps^2 S_\eps^{-1} \sB\sB^\rT S_\eps^{-*} = 0,
\end{equation}
where
\begin{equation}
\label{Qeps}
    Q_\eps := (q_{jk}(\eps))_{1\< j,k\< n}:=  S_\eps^{-1} P_\eps S_{\eps}^{-*} = Q_\eps^*\succcurlyeq 0
\end{equation}
is an auxiliary matrix. The diagonal structure of the matrix $\Lambda_\eps$ in (\ref{Lam}) allows the ALE (\ref{QepsALE}) to be solved entrywise:
\begin{equation}
\label{qjk}
  q_{jk}(\eps)
  =
  -
  \frac{\eps^2}{\lambda_j(\eps) + \overline{\lambda_k(\eps)}}
  (S_\eps^{-1} \sB\sB^\rT S_\eps^{-*})_{jk},
  \qquad
  j,k=1, \ldots, n.
\end{equation}
Now, (\ref{eigasy}) implies that
\begin{equation}
\label{lim0}
    \lambda_j(\eps) + \overline{\lambda_k(\eps)}
    =
    i(\omega_j - \omega_k)
    -
    \eps^2 (\omega_j \mu_j + \omega_k \mu_k)
     + o(\eps^2),
     \qquad
     {\rm as}\
     \eps\to 0+,
\end{equation}
and hence,
\begin{align}
\nonumber
    \lambda_j(0) + \overline{\lambda_k(0)}
    & =
    \lim_{\eps \to 0+}
    (\lambda_j(\eps) + \overline{\lambda_k(\eps)})\\
\label{lim1}
     & = i(\omega_j-\omega_k)
    \ne 0,
    \qquad
    1\< j\ne k \< n,
\end{align}
in view of (\ref{omdiff}),
while
\begin{equation}
\label{lim2}
    \lambda_k(\eps) + \overline{\lambda_k(\eps)}
    =
    2\Re \lambda_k(\eps)
    \sim
    -
    2 \omega_k \mu_k \eps^2,
    \qquad
    {\rm as}\
    \eps\to 0+.
\end{equation}
By combining (\ref{qjk}) with (\ref{lim1}), (\ref{lim2}), (\ref{SSS}), it follows that
\begin{equation}
\label{qlim}
  \lim_{\eps \to 0+}
  q_{jk}(\eps)
  =
  \frac{\delta_{jk}}{2 \omega_k\mu_k}
  (S^{-1} \sB\sB^\rT S^{-*})_{jk},
  \qquad
  j,k=1, \ldots, n,
\end{equation}
where $\delta_{jk}$ is the Kronecker delta. Since the matrix $P_{\eps}$ can be  recovered from (\ref{Qeps}) as
\begin{equation}
\label{SQS}
    P_\eps = S_\eps Q_\eps S_\eps^*,
\end{equation}
then, in view of (\ref{SSS}), (\ref{qlim}), it has the following limit
\begin{align*}
    \lim_{\eps \to 0+}
    P_\eps
    & =
    \frac{1}{2}
    S
    \diag_{1\< k \< n}
    \Big(\frac{1}{\omega_k\mu_k}
    (S^{-1} \sB\sB^\rT S^{-*})_{kk}
  \Big)
     S^*\\
    & =
    \frac{1}{2}
    R^{-1/2}
    V
    \diag_{1\< k \< n}
    \Big(\frac{1}{\omega_k\mu_k}
    (V^* \sqrt{R} \sB\sB^\rT \sqrt{R}V)_{kk}
  \Big)
     V^* R^{-1/2}\\
    & =
    \frac{1}{2}
    R^{-1/2}
    V
    \diag_{1\< k \< n}
    \Big(\frac{1}{\omega_k\mu_k}
    |\sB^\rT \sqrt{R}v_k|^2
  \Big)
     V^* R^{-1/2}\\
    & =
    \frac{1}{2}
    R^{-1/2}
    \sum_{k=1}^n
    \frac{1}{\omega_k\mu_k}
    |\sB^\rT \sqrt{R} v_k|^2
    v_k  v_k^*
    R^{-1/2},
\end{align*}
which uses the structure of the matrices $S$, $V$  from (\ref{AV}), (\ref{V}) and yields (\ref{Plim}) due to the symmetries (\ref{vsym}), (\ref{ommu3}). 
\end{proof}

The limit (\ref{Plim})  depends on the coupling shape matrix $\sM$ only through the matrix $\sB$ in (\ref{ABeps}) and the quantities  (\ref{mu}) and is invariant under the scaling transformation $\sM \mapsto \sigma \sM$ for any $\sigma\in \mR\setminus \{0\}$. Also,
$$
    \Pi+i\Theta \succcurlyeq 0,
$$
which is inherited from the positive semi-definiteness  of the quantum covariance matrices (\ref{PT}).

\section{Asymptotic estimates and exact results for a one-mode oscillator}
\label{sec:one}

In the one-mode case, the results of Section~\ref{sec:asy} hold not only in the asymptotic sense, but also beyond the weak-coupling assumption. More precisely, if $n=2$, the real antisymmetric $(2\x 2)$-matrices $\Theta$ and $\sM^\rT J \sM$ can be represented as
\begin{equation}
\label{thetagamma}
    \Theta = \theta \bJ,
    \qquad
    \sM^\rT J \sM = \gamma \bJ
\end{equation}
in terms of $\theta, \gamma  \in \mR$ and the matrix $\bJ$ from (\ref{bJ}). In particular, if the system variables of the OQHO are the conjugate position and momentum operators $q$ and $p:= -i\d_q$ on the Schwartz space, then
\begin{equation}
\label{1/2}
    \theta = \frac{1}{2}
\end{equation}
(see, for example,    \cite{S_1994}). Since the matrix $\Theta$ in (\ref{thetagamma}) satisfies $\det \Theta = \theta^2$ and hence,  $\theta \ne 0$  in view of the condition $\det \Theta \ne 0$ in (\ref{typ}), then the relation (\ref{1/2}) can be achieved by rescaling one of the system variables (for example, $X_1$) as $X_1\mapsto \frac{1}{2\theta} X_1$ while leaving the other unchanged ($X_2\mapsto X_2$). This allows (\ref{thetagamma}) to be assumed to  hold along with (\ref{1/2}) in what follows, without loss of generality. Therefore,
since $\bJ^2 = -I_2$,    the matrix $A_\eps$ in (\ref{ABeps}) takes the form
\begin{equation}
\label{Athetanu}
    A_\eps
    =
    A_0 + 2\eps^2 \theta \gamma \bJ^2
    = A_0 -\eps^2 \gamma I_2,
\end{equation}
where
\begin{align}
\nonumber
    A_0
    & = 2\theta \bJ R\\
\nonumber
     & = R^{-1/2}\sqrt{R} \bJ \sqrt{R} \sqrt{R} \\
\label{A00}
     & =
    \sqrt{\det R}\,  R^{-1/2}\bJ \sqrt{R}
\end{align}
in view of the condition $R\succ 0$ in (\ref{typ}) and the identity $N \bJ N^\rT = (\det N) \bJ$ for any $N\in \mC^{2\x 2}$. With the matrix $R^{-1/2}\bJ \sqrt{R}$ being isospectral to $\bJ$,
it follows from  (\ref{bJ}), (\ref{A00}) that the eigenfrequencies  of $A_0$ are specified by 
\begin{equation}
\label{om12}
    \omega_{1,2} = \pm \omega,
    \qquad
    \omega := \sqrt{\det R} >0.
\end{equation}
Therefore, the spectrum of the matrix $A_\eps$ in (\ref{Athetanu}) is given by
\begin{equation}
\label{lam12}
    \lambda_{1,2}(\eps)
    =
    -\eps^2 \gamma \pm i\omega
    =
    \omega (-\mu \eps^2 \pm i),
\end{equation}
and its quadratic dependence on $\eps$ is valid not only asymptotically as in (\ref{eigasy}) of  Theorem~\ref{th:asy},  with
\begin{equation}
\label{mu12}
    \mu_{1,2} = \pm \mu,
    \qquad
    \mu:= \frac{\gamma}{\omega},
\end{equation}
but for any coupling strength $\eps$. In view of (\ref{lam12}), (\ref{mu12}), the condition $\mu>0$, which secures the Hurwitz property of $A_\eps$ for all $\eps > 0$, is equivalent to
 \begin{equation}
 \label{gammapos}
    \gamma > 0.
\end{equation}
The relations (\ref{Athetanu}), (\ref{A00}) allow the corresponding matrix exponential in (\ref{taud}) to be represented as
\begin{equation}
\label{exp}
    \re^{\tau A_\eps}
    =
    \re^{-\eps^2 \gamma \tau}
    \re^{\tau A_0},
\end{equation}
where
\begin{equation}
\label{RR}
    \re^{\tau A_0}
    =
    R^{-1/2}
    \re^{\tau \sqrt{\det R}\,  \bJ }
    \sqrt{R}
    =
    R^{-1/2}
    \Sigma(\omega \tau)
    \sqrt{R}
\end{equation}
is a $T$-periodic function of time $\tau$, with the period
\begin{equation}
\label{Tom}
    T = \frac{2\pi}{\omega}
\end{equation}
expressed in terms of the eigenfrequency $\omega$ from (\ref{om12})
in accordance with (\ref{T*}).
Here, we have used the property that $\bJ$ in (\ref{bJ}) is an infinitesimal  generator of the group of planar rotation matrices:
\begin{equation}
\label{rot}
    \re^{\phi \bJ}
    =
    \begin{bmatrix}
      \cos \phi & \sin \phi\\
      -\sin \phi & \cos \phi
    \end{bmatrix}
    =:
    \Sigma(\phi),
    \qquad
    \phi \in \mR.
\end{equation}
In (\ref{RR}),
use is also made of the relation $f(SNS^{-1}) = Sf(N)S^{-1}$ for equally dimensioned complex matrices $N$, $S$,  with $\det S\ne 0$ and any  function $f$, holomorphic in a neighbourhood of the spectrum of $N$; see, for example, \cite{H_2008}.  The $T$-periodicity of $\re^{\tau A_0}$ with respect to $\tau$ in (\ref{RR}) implies that $\re^{T A_0} = I_2$, and hence,
\begin{equation}
\label{exp1}
    \|\re^{T A_\eps}\Theta \|_\rF
    =
    \re^{-\eps^2 \gamma T}
    \|\Theta\|_\rF
    =
    \re^{-2\pi \mu \eps^2}
    \|\Theta\|_\rF
\end{equation}
in view of (\ref{mu12}), (\ref{exp}), (\ref{Tom}), where $\|\Theta\|_\rF = \frac{1}{2}\|\bJ\|_\rF = \frac{1}{\sqrt{2}}$ (although the latter is irrelevant for quantifying the decay).   Therefore, the condition
\begin{equation}
\label{eps<}
    \eps < \frac{1}{\sqrt{2\pi \mu}}
\end{equation}
is necessary for the decoherence time $\tau_*$ in  (\ref{taud}) to satisfy $\tau_* > T$.
Indeed, if $\eps \>  \frac{1}{\sqrt{2\pi \mu}}$, then  (\ref{exp1}) yields
$
    \|\re^{T A_\eps}\Theta \|_\rF
    \<
    \frac{1}{\re}
    \|\Theta\|_\rF
$ and hence, $\tau_* \< T$ in view of (\ref{taud}). This implication establishes the necessity of (\ref{eps<}) for $\tau_* > T$. Furthermore, the condition (\ref{eps<}) becomes necessary and sufficient if the standard Frobenius norm $\|\cdot\|_\rF$  in  (\ref{taud}) is replaced with its weighted version
$$
    \|N\|_R := \|\sqrt{R}N\|_\rF = \sqrt{\Tr (N^\rT R N)},
$$
associated with the energy matrix $R\succ 0$ of the OQHO. Indeed, in application of this norm to the CCR matrices, it follows from (\ref{exp}), (\ref{RR}) and the orthogonality of the rotation matrix (\ref{rot}) that
\begin{align}
\nonumber
    \|\re^{\tau A_\eps} \Theta \|_R
    & =
    \re^{-\eps^2 \gamma \tau}
    \|\re^{\tau A_0} \Theta \|_R\\
\nonumber
    & =
    \re^{-\eps^2 \gamma\tau}
    \|    R^{-1/2}
    \Sigma(\omega \tau)
    \sqrt{R}\Theta \|_R    \\
\nonumber
    & =
    \re^{-\eps^2 \gamma \tau}
    \|
    \Sigma(\omega \tau)
    \sqrt{R}\Theta \|_\rF    \\
\label{RF}
    & =
    \re^{-\eps^2 \gamma \tau}
    \|
    \sqrt{R}\Theta \|_\rF    =
    \re^{-\eps^2 \gamma \tau}
    \|
    \Theta \|_R.
\end{align}
Due to the factor $\re^{-\eps^2 \gamma \tau}$ on the right-hand side of  (\ref{RF}), which decays under the condition (\ref{gammapos}),
an appropriately   modified decoherence time $\tau_R$  (with the weighted norm $\|\cdot \|_R$ instead of $\|\cdot\|_\rF$ in (\ref{taud})) takes the form
$$
    \tau_R = \frac{\eps^{-2} }{\gamma}
$$
and coincides with (\ref{tauhat}) since $\omega \mu = \gamma$ in view of (\ref{mu12}). Accordingly, the low decoherence requirement $\tau_R \gg T$ for the one-mode oscillator  on the time scale of (\ref{Tom}) becomes (\ref{eps<<}) as a strong version of (\ref{eps<}).

For completeness, we will also discuss the behaviour of the invariant real covariance matrix $P_\eps$ from  (\ref{Peps}). In view of (\ref{Athetanu})  in the one-mode case being considered,  the diagonalising matrix $S_\eps$ in (\ref{SAS}) does not depend on $\eps$ and coincides with the matrix $S$ from (\ref{AV}), where
\begin{equation}
\label{V1}
    V
    =
    \begin{bmatrix}
      v_1 & v_2
    \end{bmatrix},
    \qquad
    v_{1,2} :=
    \frac{1}{\sqrt{2}}
    \begin{bmatrix}
      1 \\
      \pm i
    \end{bmatrix}
\end{equation}
is formed from the orthonomal eigenvectors of the matrix $\bJ$ from (\ref{bJ}),  in accordance with (\ref{V})--(\ref{vsym}) and the relations
$$
    2\sqrt{R}\Theta \sqrt{R} = \sqrt{R}\bJ \sqrt{R} = \omega \bJ ,
$$
where $\omega$ is the eigenfrequency  given by (\ref{om12}).
Hence, the representation (\ref{SQS}) takes the form
\begin{equation}
\label{SQS1}
    P_\eps
    =
    S Q_\eps S^*
    =
    R^{-1/2}
    \sum_{j,k=1}^2
    q_{jk}(\eps)
    v_j v_k^*
    R^{-1/2},
\end{equation}
where the matrix $Q_\eps := (q_{jk}(\eps))_{1\< j,k\< 2}$ in (\ref{Qeps}) is computed by combining (\ref{qjk}) and (\ref{lam12}) as
\begin{equation}
\label{qjk1}
  q_{jk}(\eps)
  =
  -
  \frac{\eps^2}{\lambda_j(\eps) + \overline{\lambda_k(\eps)}}
  v_j^* \sqrt{R}\sB\sB^\rT \sqrt{R} v_k,
  \qquad
  j,k=1, 2,
\end{equation}
with
\begin{align}
\label{lam123_1}
    \lambda_k(\eps)+\overline{\lambda_k(\eps)}
    & = 2\Re \lambda_k(\eps) = -2\eps^2 \gamma,\\
\label{lam123_2}
    \lambda_k(\eps)+\overline{\lambda_{3-k}(\eps)}
    & =
    2(-\eps^2 \gamma \pm i\omega),
    \qquad
    k = 1,2,
\end{align}
so that the asymptotic relations (\ref{lim0}), (\ref{lim2}) are exact in the one-mode case.
Substitution of (\ref{qjk1})--(\ref{lam123_2}) into (\ref{SQS1}) leads to
\begin{align}
\nonumber
    P_\eps
     = &
    \underbrace{\frac{1}{2\gamma}
    |\sB^\rT \sqrt{R}v_1|^2 R^{-1}}_{\Pi}\\
\label{Peps1}
    & +
    \underbrace{\eps^2
    R^{-1/2}
        \Re
        \Big(
            \frac{1}{\eps^2 \gamma - i\omega}
            (v_1^*\sqrt{R}\sB \sB^\rT \sqrt{R} v_2)
            v_1v_2^*
        \Big)
    R^{-1/2}}_{O(\eps^2)}.
\end{align}
Here, the first term uses the property $\Re(v_kv_k^*) = \frac{1}{2} I_2$ of the eigenvectors in  (\ref{V1}) and describes the weak-coupling limit (\ref{Plim}) for the invariant covariance matrix, while the second term in (\ref{Peps1}) calculates the remainder in closed form.

\section{A numerical example with a two-mode oscillator}
\label{sec:num}

For a numerical illustration, consider a two-mode ($n=4$) OQHO with a three-channel ($m=6$) external field. The system variables are two  conjugate position-momentum pairs $(q_1,p_1)$, $(q_2,p_2)$ which are  assembled into the vector
$X:= [
  q_1,
  q_2,
  p_1,
  p_2
]^\rT$ in (\ref{XCCR}) with the CCR matrix
\begin{equation}
\label{TJI}
    \Theta = \frac{1}{2} \bJ\ox I_2,
\end{equation}
where $\bJ$ is given by (\ref{bJ}). The energy and coupling shape matrices are generated randomly (subject to the conditions of Theorem~\ref{th:stab}):
\begin{equation}
\label{RM1}
    R :=
    {\scriptsize\begin{bmatrix}
    1.3522 &   1.2976 &  -0.5225 &  -2.7819\\
    1.2976 &   6.0400 &   1.1741 &  -1.4901\\
   -0.5225 &   1.1741 &   2.2582 &   1.5566\\
   -2.7819 &  -1.4901 &   1.5566 &   6.7739
    \end{bmatrix}},
    \quad
    \sM:=
    {\scriptsize\begin{bmatrix}
   -1.7423  & -1.6364 &  -0.3020 &  -0.3483\\
    0.2053  &  0.0173 &   1.8136 &   1.4126\\
    1.1929  &  0.8284 &   0.9149 &   1.5024\\
   -0.8028  &  0.2177 &  -0.0571 &   0.7304\\
   -1.2656  & -1.9092 &   1.3094 &   0.4908\\
   -0.1493  & -0.5368 &  -1.0447 &  -0.5861
   \end{bmatrix}}.
\end{equation}
The positive eigenfrequencies (\ref{sym}) of the matrix $A_0$ for the uncoupled oscillator in (\ref{A0}) are
$$
   \omega_1 = 7.2046, \qquad    \omega_2 = 0.3729,
$$
and the corresponding orthonormal eigenvectors from (\ref{veig}) are
$$
    v_1
    =
    {\scriptsize\begin{bmatrix}
    0.1046 - 0.2450i\\
   0.6592 + 0.0000i\\
   0.2323 + 0.1748i\\
  -0.0234 + 0.6399i
    \end{bmatrix}},
    \qquad
    v_2
    =
    {\scriptsize\begin{bmatrix}
    0.6550 + 0.0000i\\
   -0.1053 - 0.2332i\\
   0.0283 + 0.6440i\\
  0.2431 - 0.1759i
    \end{bmatrix}}.
$$
The calculation of the quantities (\ref{mu}), associated with these eigenvectors, yields
$$
   \mu_1 =  0.1765,
   \qquad
   \mu_2 =
    5.2214,
$$
and the leading coefficient of the asymptotic approximation on the right-hand side of (\ref{lead}) takes the value
$$
    \min_{k=1,2}(\omega_k \mu_k)=1.2714.
$$
A comparison of the exact value of the leading Lyapunov exponent with its approximation is provided by Fig.~\ref{fig:expAeps}. 
\begin{figure}[htbp]
\begin{center}
\includegraphics[width=10cm]{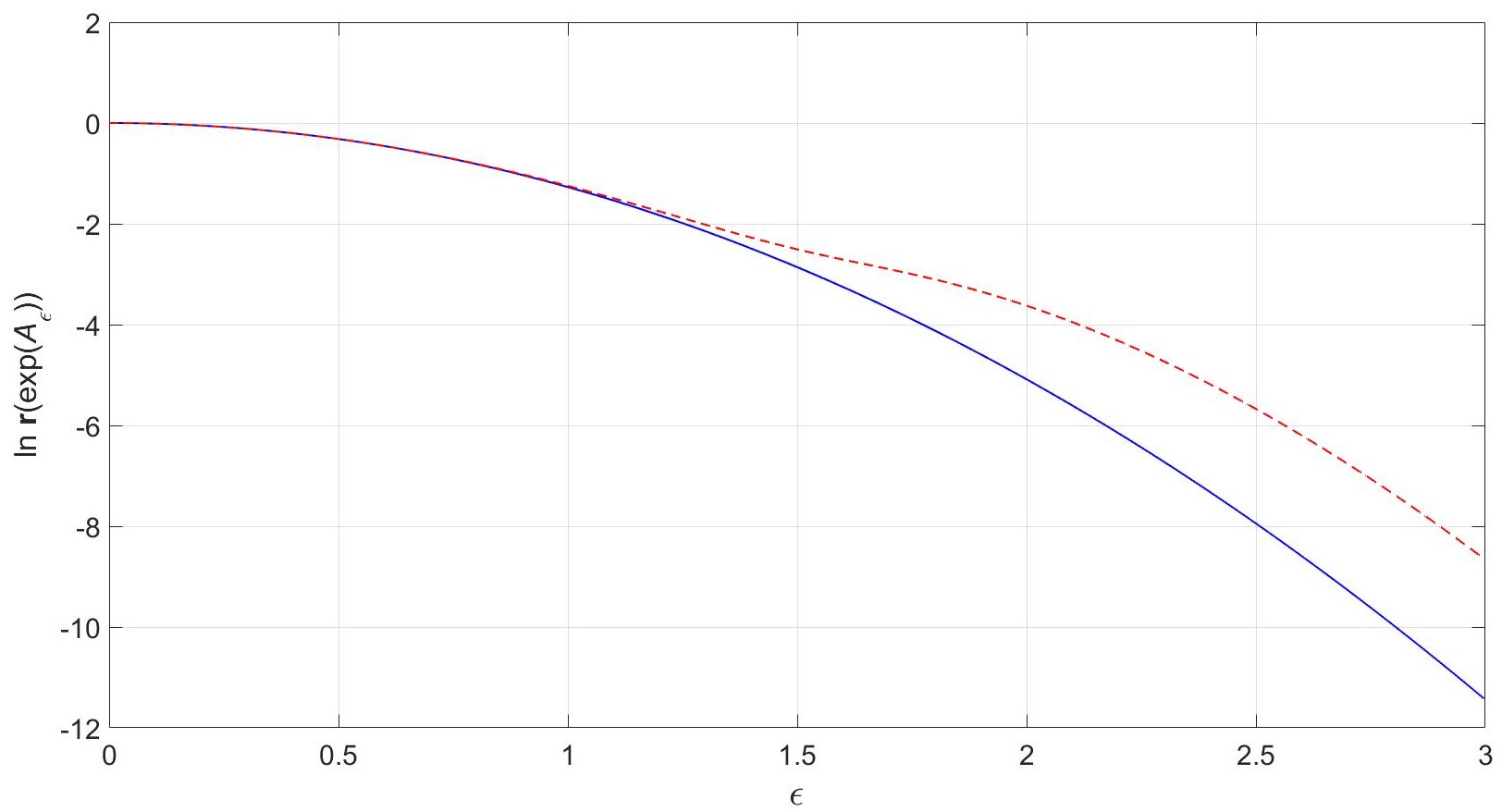}
\caption{The exact value $\ln \br(\re^{A_\eps})$ of the leading Lyapunov exponent for the matrix $A_\eps$ of the OQHO in (\ref{ABeps}) as a function of the coupling strength parameter $\eps$ (dashed line)  in comparison with its asymptotic approximation $- \eps^2 \min_{k = 1, 2}(\omega_k\mu_k)$ from (\ref{lead})  (solid line) for the two-mode OQHO example (\ref{TJI}), (\ref{RM1}). }
\label{fig:expAeps}
\end{center}
\end{figure} 
The resulting thresholds for the   coupling strength parameter in (\ref{epsmax}), (\ref{mumax}) are
$$
    \wh{\eps} = 0.2161,
    \qquad
    \wt{\eps} = 0.1746 .
$$
As Fig.~\ref{fig:expAeps} shows, the requirement $\eps\ll \wh{\eps}$ (or $\eps\ll \wt{\eps}$) for securing a low decoherence level for the OQHO, obtained in the weak-coupling limit, is consistent with the validity range of this approximation in this example.

\section{Decoherence control by interconnection}
\label{sec:two}

We will now apply the results of the previous sections to a decoherence control setting for two OQHOs (interpreted, for example,  as a quantum plant and a quantum controller) which, in addition to their interaction with external bosonic fields,   are coupled to each other in a coherent (measurement-free) fashion. The latter involves a direct energy coupling and an indirect field-mediated coupling \cite{ZJ_2011a} as shown in
Fig.~\ref{fig:system}. 
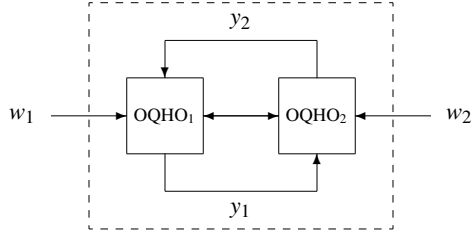
\begin{figure}[htbp]
\centering
\unitlength=1mm
\linethickness{0.2pt}
\begin{picture}(50.00,30.00)
    \put(5,0){\dashbox(40,30)[cc]{}}
    \put(10,10){\framebox(10,10)[cc]{\scriptsize OQHO${}_1$}}
    \put(30,10){\framebox(10,10)[cc]{\scriptsize OQHO${}_2$}}
    \put(0,15){\vector(1,0){10}}
    \put(50,15){\vector(-1,0){10}}
     \put(35,20){\line(0,1){5}}
    \put(35,25){\line(-1,0){20}}
    \put(15,25){\vector(0,-1){5}}

     \put(15,10){\line(0,-1){5}}
    \put(15,5){\line(1,0){20}}
    \put(35,5){\vector(0,1){5}}

    \put(20,15){\vector(1,0){10}}
    \put(30,15){\vector(-1,0){10}}
    \put(-2,15){\makebox(0,0)[rc]{\small $w_1$}}
    \put(25,27.5){\makebox(0,0)[cc]{\small $y_2$}}
    \put(52,15){\makebox(0,0)[lc]{\small $w_2$}}
    \put(25,2.5){\makebox(0,0)[cc]{\small $y_1$}}
\end{picture}
\caption{
    An interconnection of two OQHOs, which have external input quantum Wiener processes  $w_1$, $w_2$  and interact with each other both directly (through the energy coupling shown as a double arrow) and indirectly in a field-mediated fashion through quantum Ito processes $y_1$, $y_2$ at their corresponding outputs.
}
\label{fig:system}
\end{figure}
In this fully quantum feedback interconnection,
the external fields  are modelled by
$\frac{m_1}{2}$- and $\frac{m_2}{2}$-channel quantum Wiener processes  $w_1$, $w_2$ (with even $m_1$, $m_2$) on symmetric Fock spaces $\fF_1$, $\fF_2$, respectively. These processes form an augmented quantum Wiener process
\begin{equation}
\label{womega}
    W
    :=
    \begin{bmatrix}
        w_1\\
        w_2
    \end{bmatrix}
\end{equation}
on the composite Fock space $\fF:= \fF_1\ox \fF_2$ and, similarly to (\ref{Omega}), have the following Ito tables
\begin{equation}
\label{wwww}
    \rd w_k\rd w_k^{\rT} = \Omega_k \rd t,
    \qquad
    \rd W \rd W^{\rT} = \Omega \rd t   ,
    \qquad
    k = 1, 2,
\end{equation}
with the quantum Ito matrices $\Omega_1$, $\Omega_2$, $\Omega$ given by
\begin{align}
\nonumber
    \Omega_k
    & := I_{m_k} + iJ_k,
    \qquad
    J_k:= \bJ \ox I_{m_k/2},\\
\label{Om12}
    \Omega
    & :=
    \blockdiag_{k=1,2}(\Omega_k)
    =
    I_m + iJ,
    \qquad
    J:=
    \blockdiag_{k=1,2}(J_k)    ,
    \qquad
    m:= m_1+m_2,
\end{align}
where $\bJ$ is the matrix  from (\ref{bJ}). The constituent OQHOs are endowed with initial spaces $\fH_1$, $\fH_2$ and even numbers $n_1$, $n_2$ of dynamic variables, acting  on the space $\fH:= \fH_0\ox \fF$ (with $\fH_0:= \fH_1 \ox \fH_2$ the initial space of the composite system) and  assembled into vectors $x_1$, $x_2$ which form the augmented vector
\begin{equation}
\label{xx}
    X
    :=
    \begin{bmatrix}
        x_1\\
        x_2
    \end{bmatrix}.
\end{equation}
In accordance with (\ref{XCCR}), the one-point CCR matrices $\Theta_k = -\Theta_k^\rT \in \mR^{n_k\x n_k}$ of the OQHOs specify the relations
\begin{equation}
\label{Theta12}
    [x_k,x_k^{\rT}] = 2i\Theta_k,
    \qquad
    [X,X^{\rT}] = 2i\Theta,
    \qquad
    \Theta
    :=
    \blockdiag_{k=1,2}(\Theta_k),
    \qquad
    k = 1, 2.
\end{equation}
Here, the block-diagonal structure of the augmented CCR matrix $\Theta$ comes from the commutativity
\begin{equation}
\label{x12comm}
    [x_1, x_2^\rT] = 0
\end{equation}
of the dynamic variables of the constituent OQHOs  as operators  acting initially (at time $t=0$)  on different spaces $\fH_1$, $\fH_2$ and since the system-field evolution preserves the CCRs. The direct coupling of the OQHOs is modelled by complementing the individual Hamiltonians $\frac{1}{2} x_k^\rT R_k x_k$ of the OQHOs, specified by their energy matrices $R_k = R_k^\rT \in \mR^{n_k \x n_k}$ for $k=1,2$,  with an additional term
\begin{equation}
\label{Hint}
    x_1^\rT R_{12} x_2 = x_2^\rT R_{21} x_1
\end{equation}
parameterised by $R_{12} = R_{21}^\rT \in \mR^{n_1\x n_2}$. The equality (\ref{Hint}) uses the commutativity (\ref{x12comm}).
The indirect coupling of the OQHOs in Fig.~\ref{fig:system} is mediated by their $\frac{p_1}{2}$- and $\frac{p_2}{2}$-channel output fields $y_1$, $y_2$ (with $p_1$, $p_2$ even) which are quantum Ito processes whose
Heisenberg dynamics are governed by the linear QSDEs
\begin{align}
\label{x}
    \rd x_k
    & =
    (A_k x_k + F_k x_{3-k}) \rd t  +  B_k \rd w_k  + E_k \rd y_{3-k} ,\\
\label{y}
    \rd y_k
    & =
    C_k x_k \rd t  +  D_k \rd w_k ,
    \qquad
    k = 1, 2.
\end{align}
The matrices
\begin{equation*}
\label{ABCDEF}
    A_k\in \mR^{n_k\x n_k},
    \
    B_k\in \mR^{n_k\x m_k},
    \
    C_k\in \mR^{p_k\x n_k},
    \
    D_k\in \mR^{p_k\x m_k},
    \
    E_k\in \mR^{n_k\x p_{3-k}},\
    F_k\in \mR^{n_k\x n_{3-k}}
\end{equation*}
are parameterised as
\begin{align}
\label{Ak}
    A_k
     & =
    2\Theta_k(R_k + M_k^{\rT}J_k M_k + L_k^{\rT}\wt{J}_{3-k}L_k),\\
\label{B_k}
    B_k
    & = 2\Theta_k M_k^{\rT},\\
\label{Ck}
    C_k & =2D_kJ_k M_k,\\
\label{Ek}
    E_k & = 2\Theta_k L_k^{\rT},\\
\label{Fk}
    F_k & = 2\Theta_k R_{k,3-k},
\end{align}
where
\begin{equation}
\label{tJk}
    \wt{J}_k:= D_kJ_kD_k^{\rT},
    \qquad
    k = 1,2,
\end{equation}
which can be obtained by using the quantum feedback network formalism \cite{GJ_2009,JG_2010}.
Here, 
$M_k\in \mR^{m_k \x n_k}$,  $L_k\in \mR^{p_{3-k} \x n_k}$ are the matrices of coupling of the $k$th  OQHO to the corresponding external input field $w_k$ and the output $y_{3-k}$ of the other OQHO, respectively.  Also,
each of the feedthrough matrices $D_k$ in (\ref{y}) is associated with a particular selection of the component output fields  and is formed from conjugate pairs of rows of a permutation matrix of order $m_k$, so that $p_k\< m_k$, with
\begin{equation*}
\label{DDI}
    D_kD_k^{\rT} = I_{p_k},
    \qquad
    k = 1,2.
\end{equation*}
The combined QSDEs
 (\ref{x}), (\ref{y}) describe an augmented OQHO with $n:= n_1+n_2$ dynamic variables in (\ref{xx}),  which have the one-point CCR matrix $\Theta$ in (\ref{Theta12}) and are driven by the $\frac{m}{2}$-channel quantum Wiener process $W$ in (\ref{womega})--(\ref{Om12})  according to the QSDE (\ref{dX}) with the matrices
 \begin{equation}
\label{cAB}
    A
    =
    \begin{bmatrix}
        A_1 & F_1+ E_1C_2\\
        F_2 +E_2C_1 & A_2
    \end{bmatrix},
    \qquad
    B
    =
    \begin{bmatrix}
        B_1 & E_1D_2\\
        E_2D_1 & B_2
    \end{bmatrix}.
\end{equation}
The energy and coupling matrices of the resulting closed-loop OQHO in Fig.~\ref{fig:system} are computed by substituting (\ref{Ak})--(\ref{tJk}) into (\ref{cAB}) and comparing the result with (\ref{AB}) in combination with (\ref{Om12}), (\ref{Theta12}):
\begin{align}
\label{Rclos}
    R & =
    \begin{bmatrix}
      R_1                                       & R_{12}+\frac{1}{2}(L_1^{\rT}C_2 +C_1^{\rT}L_2)\\
      R_{21}+\frac{1}{2}(C_2^{\rT}L_1+L_2^{\rT}C_1)   & R_2
    \end{bmatrix}
    =
    R_0
    +
    \wt{R}  ,\\
\label{Mclos}
    M  & =
    \begin{bmatrix}
      M_1 & D_1^{\rT}L_2 \\
      D_2^{\rT}L_1 & M_2
    \end{bmatrix}.
\end{align}
Here,
\begin{equation}
\label{R0}
  R_0 :=
  \begin{bmatrix}
    R_1 & R_{12}\\
    R_{21} & R_2
  \end{bmatrix}
\end{equation}
is the energy matrix which the closed-loop system would have if the indirect field-mediated coupling between the constituent OQHOs were removed (that is, in the case when $L_1 = 0$, $L_2 = 0$). Accordingly, the additional term
\begin{equation}
\label{Rt}
    \wt{R}
    :=
    R-R_0
    =
    \begin{bmatrix}
      0                                       & L_1^{\rT}D_2J_2 M_2 -M_1^\rT J_1 D_1^\rT L_2\\
      L_2^{\rT}D_1J_1 M_1 -M_2^\rT J_2 D_2^\rT L_1  & 0
    \end{bmatrix}
\end{equation}
in  (\ref{Rclos}) originates from the field-mediated coupling between the OQHOs. With the feedthrough matrices $D_1$, $D_2$ being fixed, the closed-loop coupling matrix $M$ in (\ref{Mclos}) is a linear function of the coupling matrices $M_1$, $L_1$, $M_2$, $L_2$, while $\wt{R}$ in (\ref{Rt}) depends  on them in a quadratic fashion. The role of the matrix (\ref{A0}) for the composite system will now be played by
\begin{equation}
\label{A0clos}
  A_0
  :=
  2\Theta R_0
  =
  2
  \begin{bmatrix}
    \Theta_1 R_1 & \Theta_1 R_{12}\\
    \Theta_2 R_{21} & \Theta_2 R_2
  \end{bmatrix}.
\end{equation}
In application to the block-diagonal matrix $\Theta$ in (\ref{Theta12}) and the matrix $R_0$  in (\ref{R0}), the condition  (\ref{typ}) is equivalent to
\begin{equation}
\label{typclos}
    R_0 \succ 0 ,
    \qquad
    \det \Theta_k \ne 0,
    \qquad
    k = 1,2,
\end{equation}
and guarantees that the spectrum of the matrix $A_0$ in (\ref{A0clos}) is purely imaginary and specified by the eigenfrequencies $\omega_1, \ldots, \omega_n \in \mR$ as before. The relations  (\ref{AV}) are modified as
\begin{equation}
\label{AVclos}
    A_0 = i S \mho S^{-1},
    \qquad
    S := R_0^{-1/2}V,
    \qquad
    \mho := \diag_{1\< k \< n} (\omega_k),
\end{equation}
where, this time,  the unitary matrix $V$ in (\ref{V}) consists of the eigenvectors $v_1, \ldots, v_n$ of the Hermitian matrix
\begin{equation}
\label{VVclos}
    -2i\sqrt{R_0}\Theta \sqrt{R_0} = V\mho V^*.
\end{equation}
In accordance with the weak-coupling framework of Section~\ref{sec:asy}, suppose the coupling matrices
\begin{equation}
\label{Mepsclos}
    M_k := \eps \sM_k,
    \qquad
    L_k := \eps \sL_k,
    \qquad
    k = 1,2,
\end{equation}
are specified by a coupling strength parameter $\eps \> 0$ and appropriately dimensioned coupling shape matrices $\sM_k$, $\sL_k$. Then (\ref{Mclos}), (\ref{Rt}) are represented as  $M = \eps \sM$, $\wt{R} = \eps^2 \sR$ in terms of
\begin{align}
\label{sMclos}
    \sM
    & :=
    \begin{bmatrix}
      \sM_1 & D_1^{\rT}\sL_2 \\
      D_2^{\rT}\sL_1 & \sM_2
    \end{bmatrix},\\
\label{sRt}
    \sR
    &
    :=
    \begin{bmatrix}
      0                                       & \sL_1^{\rT}D_2J_2 \sM_2 -\sM_1^\rT J_1 D_1^\rT \sL_2\\
      \sL_2^{\rT}D_1J_1 \sM_1 -\sM_2^\rT J_2 D_2^\rT \sL_1  & 0
    \end{bmatrix},
\end{align}
with $\sR = \sR^\rT \in \mR^{n\x n}$,
and hence, the matrices (\ref{AB}) of the closed-loop OQHO depend on $\eps$ as
\begin{equation}
\label{ABepsclos}
    A_\eps
    =
    A_0 + 2\eps^2 \Theta (\sR + \sM^\rT J \sM),
    \qquad
     B_\eps = \eps \sB,
     \qquad
     \sB := 2\Theta \sM^\rT,
\end{equation}
with $A_0$ given by (\ref{A0clos}). 
The following theorem provides an appropriate adaptation of Theorem~\ref{th:asy}.

\begin{thm}
\label{th:asyclos}
Suppose the CCR and energy matrices of the constituent OQHOs satisfy (\ref{typclos}). Also, suppose the eigenfrequencies $\omega_1,\ldots, \omega_n$  of the matrix $A_0$ in  (\ref{A0clos})  are   pairwise different.
Then for all sufficiently small values of the coupling strength parameter $\eps$ in (\ref{Mepsclos}), the matrix $A_\eps$ in (\ref{ABepsclos}) has different eigenvalues $\lambda_1(\eps), \ldots, \lambda_n(\eps)$ which, appropriately numbered, behave asymptotically as
\begin{equation}
\label{eigasyclos}
  \lambda_k(\eps)
  =
  \omega_k(i(1+\eps^2 \sigma_k) - \eps^2\mu_k) + o(\eps^2),
  \qquad
  {\rm as}\
  \eps \to 0+.
\end{equation}
Here,
\begin{align}
\label{muclos}
  \mu_k
  & := -i v_k^* R_0^{-1/2} \sM^\rT J \sM R_0^{-1/2} v_k,\\
\label{sigclos}
  \sigma_k
  & :=  v_k^* R_0^{-1/2} \sR R_0^{-1/2} v_k
  \qquad
  k = 1, \ldots, n,
\end{align}
are real-valued quantities using
the columns $v_1, \ldots, v_n$ of the unitary matrix (\ref{V}), associated with (\ref{VVclos}),  and the matrices $R_0$, $\sM$, $\sR$  from (\ref{R0}), (\ref{sMclos}), (\ref{sRt}). \hfill$\square$
\end{thm}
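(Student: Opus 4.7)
The plan is to adapt the proof of Theorem~\ref{th:asy} to the closed-loop setting. The only genuinely new ingredient is the additional $O(\eps^2)$ perturbation $2\eps^2\Theta\sR$ of $A_0$ coming from the field-mediated modification (\ref{sRt}) of the energy matrix; the rest of the argument goes through essentially verbatim, with $R_0$ and the eigendata from (\ref{AVclos})--(\ref{VVclos}) playing the roles of $R$ and the eigendata from (\ref{V})--(\ref{vsym}) in the single-oscillator case. First, I would invoke the Gershgorin localisation theorem \cite{H_2008,HJ_2007} applied to the diagonalisation (\ref{AVclos}) to conclude, as in the proof of Theorem~\ref{th:asy}, that the pairwise distinctness of $\omega_1, \ldots, \omega_n$ implies the pairwise distinctness of the eigenvalues $\lambda_k(\eps)$ of $A_\eps$ for all sufficiently small $\eps$, along with their smooth dependence on $\eps$.

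Next, I would apply the similarity transformation $A_\eps \mapsto S^{-1}A_\eps S$ with $S = R_0^{-1/2}V$ from (\ref{AVclos}), and use the identity $2\sqrt{R_0}\,\Theta\sqrt{R_0} = iV\mho V^*$ (equivalent to (\ref{VVclos})), by the same manipulations as in (\ref{SS}), to rewrite the conjugated dynamics matrix as
\begin{equation*}
  S^{-1}A_\eps S
  = i\mho + i\eps^2 \mho V^* R_0^{-1/2}(\sR + \sM^\rT J \sM) R_0^{-1/2} V.
\end{equation*}
Since the diagonal entries of $\mho$ are pairwise distinct by assumption, the spectrum perturbation results \cite{M_1985} invoked in Theorem~\ref{th:asy} apply, so that the $\eps^2$-coefficients in the asymptotic expansions of $\lambda_k(\eps) - i\omega_k$ are given by the $(k,k)$-entries of the perturbation matrix. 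A direct computation using (\ref{muclos})--(\ref{sigclos}) evaluates these entries as $\omega_k(i\sigma_k - \mu_k)$, which is exactly (\ref{eigasyclos}). Reality of $\mu_k$ follows, as in Theorem~\ref{th:asy}, from antisymmetry of $\sM^\rT J \sM$ (which makes $iR_0^{-1/2}\sM^\rT J \sM R_0^{-1/2}$ Hermitian), while reality of $\sigma_k$ follows directly from the symmetry $\sR = \sR^\rT$ noted after (\ref{sRt}).

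The only real obstacle is notational bookkeeping: one must keep the similarity transformation and the spectrum perturbation expansion consistent despite the additional $\sR$ contribution and the block-matrix structure of $\Theta$ and $R_0$ inherited from (\ref{Theta12}) and (\ref{R0}). No new analytical ideas beyond those already present in the proof of Theorem~\ref{th:asy} are required; the shift $i\eps^2\omega_k\sigma_k$ in the imaginary part of $\lambda_k(\eps)$, reflecting the eigenfrequency perturbation from the direct-plus-field-mediated energy correction, emerges as a by-product of the same computation that yields the dissipation coefficient $-\eps^2\omega_k\mu_k$.
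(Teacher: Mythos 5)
Your proposal is correct and follows essentially the same route as the paper's own proof: it repeats the argument of Theorem~\ref{th:asy} with the similarity transformation modified to include the extra $\sR$ contribution, reads off the diagonal entries $\omega_k(i\sigma_k-\mu_k)$ of the perturbation matrix via the first-order spectrum perturbation results, and obtains reality of $\mu_k$, $\sigma_k$ from the Hermitian and symmetric structure exactly as in (\ref{SSclos}) and the surrounding computation. No substantive difference from the paper's proof.
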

\begin{proof}
The relation (\ref{eigasyclos}) is established by repeating the proof of Theorem~\ref{th:asy} almost verbatim,  except that, in application to the matrix $A_\eps$ in (\ref{ABepsclos}), the similarity transformation (\ref{SS}) is modified as 
\begin{align}
\nonumber
    S^{-1}A_\eps S
    & =
    i \mho  + 2\eps^2 S^{-1}\Theta (\sR + \sM^\rT J \sM) S \\
\nonumber
    & =
    i \mho  + 2\eps^2 V^*\sqrt{R_0}\Theta (\sR + \sM^\rT J \sM)R_0^{-1/2}V\\
\nonumber
    & =
    i \mho  + 2\eps^2 V^*\sqrt{R_0}\Theta\sqrt{R_0}R_0^{-1/2} (\sR + \sM^\rT J \sM)R_0^{-1/2}V\\
\nonumber
    & =
    i \mho  + i\eps^2 V^*V\mho V^*R_0^{-1/2} (\sR + \sM^\rT J \sM)R_0^{-1/2}V\\
\label{SSclos}
    & =
    i \mho  + i\eps^2 \mho V^*R_0^{-1/2} (\sR + \sM^\rT J \sM)R_0^{-1/2}V
\end{align}
by using (\ref{AVclos}), (\ref{VVclos}). Accordingly, (\ref{kk}), leading to (\ref{muclos}) with $R_0$ from (\ref{R0}) instead of $R$, is complemented by 
$$
    (i\mho V^*R_0^{-1/2} \sR R_0^{-1/2}V)_{kk}
    =
    i\omega_k v_k^*R_0^{-1/2} \sR R_0^{-1/2}v_k
    =
    i\omega_k \sigma_k
$$
in view of (\ref{sigclos}), which,  together with (\ref{SSclos}),  yields (\ref{eigasyclos}).   The quantities $\sigma_k$ are also real-valued since $\sR$ in (\ref{sRt}) is a real symmetric matrix. 
\end{proof}

The asymptotic relations (\ref{eigasyclos}) show that the quantities $\mu_1, \ldots, \mu_n$ are responsible for the stability of the closed-loop OQHO for all sufficiently small values of the coupling strength $\eps>0$ and participate in the Lyapunov exponents and decoherence time estimates of Section~\ref{sec:asy}. More precisely, this stability is secured by
the condition (\ref{ommu1}) from the proof of Theorem~\ref{th:stab}. The representation (\ref{muclos}) allows $\mu_1, \ldots, \mu_n$ to be found as the diagonal entries of the Hermitian matrix
\begin{equation}
\label{K}
    K:= -i V^*  R_0^{-1/2}  \sM^\rT J \sM R_0^{-1/2} V,
\end{equation}
which, in view of (\ref{sMclos}),  depends in a quadratic fashion on the coupling shape matrices $\sM_1$, $\sL_1$, $\sM_2$, $\sL_2$. This can be used for a rational choice of the coupling shape in order to achieve given specifications on the stability margins and decoherence levels for the OQHO interconnection in the weak-coupling framework.

In the absence of direct coupling between the OQHOs, when $R_{12} = 0$ in (\ref{Hint}),  the matrix $R_0$ in (\ref{R0}) is block-diagonal:
\begin{equation}
\label{R0diag}
    R_0 = \blockdiag_{k=1,2}(R_k),
\end{equation}
so that the condition $R_0\succ 0$ in (\ref{typclos}) is equivalent to $R_k \succ 0$ for $k=1,2$. By combining (\ref{R0diag}) with  the block-diagonal structure of the CCR matrix $\Theta$ in (\ref{Theta12}), the matrix $V$ in (\ref{AVclos}), (\ref{VVclos}) takes the form
\begin{equation}
\label{Vdiag}
    V = \blockdiag_{k=1,2}(V_k),
\end{equation}
where $V_k \in \mC^{n_k\x n_k}$ is a unitary matrix whose columns are the eigenvectors of the Hermitian matrix
$$
    -2i\sqrt{R_k}\Theta_k \sqrt{R_k} = V_k\mho_k V_k^*,
    \qquad
    k = 1,2,
$$
with $\mho_k$ a diagonal matrix formed from the eigenfrequencies of $2\Theta_k R_k$. In this case,  the block-diagonal structure of the matrices $J$, $R_0$, $V$ in (\ref{Om12}), (\ref{R0diag}), (\ref{Vdiag}) alows the matrix $K$ in (\ref{K}) to be computed as
\begin{equation}
\label{mat}
    K
     =
    \begin{bmatrix}
        K_1 & *\\
        * & K_2
    \end{bmatrix},
    \quad
    K_j = -iV_j^* R_j^{-1/2}(\sM_j^\rT J_j \sM_j + \sL_j^\rT \wt{J}_{3-j} \sL_j)R_j^{-1/2}V_j,
    \qquad
    j = 1,2,
\end{equation}
where (\ref{tJk}), (\ref{sMclos}) are also used, with  the off-diagonal blocks ``$*$'' of the matrix $K$ in (\ref{mat}) being irrelevant for (\ref{muclos}).

\section{Conclusion}
\label{sec:conc}

For quantum harmonic oscillators, we have discussed their isolated and open dynamics, including the role of the system-field coupling matrix in the energy exchange with the environment and the decay in the two-point CCRs of the system variables. The decay time has been proposed as a decoherence measure for OQHOs  and equipped with an upper bound using algebraic Lyapunov inequalities. For the class of OQHOs with a nonsingular CCR matrix and a positive definite energy matrix, with different eigenfrequencies  in the absence of coupling,
we have applied spectrum perturbation techniques to  investigating the asymptotic behaviour of the leading Lyapunov exponent for the dynamics matrix of the OQHO with a small coupling strength parameter and a given coupling shape matrix. A threshold value has  been established for the coupling strength in order to guarantee that the oscillator retains the dynamic properties of its uncoupled version over a large number of cycles before the decay sets in. 
The asymptotic properties of the covariance matrix of the invariant Gaussian quantum state of the system in the  weak-coupling limit has also been discussed. 
These results have been demonstrated for one- and two-mode oscillators with multichannel external fields and also applied to a coherent feedback interconnection of OQHOs with direct and field-mediated coupling.  The findings of the paper can be used for the formulation of performance criteria for quantum feedback networks \cite{GJ_2009,JG_2010} requiring a  controlled isolation from the environment for quantum information processing applications \cite{NC_2000}. 
Another potential application of decoherence measures, discussed in the paper, is the robustness of coherent quantum control architectures \cite{JNP_2008,MP_2009,NJP_2009,SVP_2017,VP_2013a,ZJ_2011a} to unmodelled noises.

\end{document}